\newtheorem*{rep@theorem}{\rep@title}
\newcommand{\newreptheorem}[2]{%
\newenvironment{rep#1}[1]{%
  \def\rep@title{#2 \ref{##1}}%
  \begin{rep@theorem}}%
  {\end{rep@theorem}}}
\numberwithin{equation}{section}
\numberwithin{figure}{section}
\numberwithin{table}{section}
\setlist{itemsep=\smallskipamount}
\newcommand{\captiondetail}{%
  \justifying\small\mbox{}\\\noindent}
\newcommand{\noqed}{\renewcommand{\qed}{}}
\let\@oldmath\(
\renewcommand{\(}{
  \@oldmath}
\let\@oldendmath\)
\renewcommand{\)}{%
  \ifdefstring{\f@shape}{it}{%
    \iftoggle{noendsub}{\sb{}}{}\kern-\scriptspace}{}\@oldendmath}
\let\@oldtikzpicture\tikzpicture
\def\tikzpicture{%
  \catcode`\$=3 
  \@oldtikzpicture}
\renewcommand{\footnotetextcopyrightpermission}[1]{}
\renewcommand{\bar}[1]{%
  \mkern 2.2mu\overline{\mkern-2.2mu#1\mkern-0.8mu}\mkern 0.8mu}
\title{SRPT for Multiserver Systems}
\author{Isaac Grosof}
\affiliation{%
  \institution{Carnegie Mellon University}
  \department{Computer Science Department}
  \streetaddress{5000 Forbes Ave}
  \city{Pittsburgh}
  \state{PA}
  \postcode{15213}
  \country{USA}}
\email{igrosof@cs.cmu.edu}
\author{Ziv Scully}
\affiliation{%
  \institution{Carnegie Mellon University}
  \department{Computer Science Department}
  \streetaddress{5000 Forbes Ave}
  \city{Pittsburgh}
  \state{PA}
  \postcode{15213}
  \country{USA}}
\email{zscully@cs.cmu.edu}
\author{Mor Harchol-Balter}
\affiliation{%
  \institution{Carnegie Mellon University}
  \department{Computer Science Department}
  \streetaddress{5000 Forbes Ave}
  \city{Pittsburgh}
  \state{PA}
  \postcode{15213}
  \country{USA}}
\email{harchol@cs.cmu.edu}
\newcommand{\ldelimP}{\{}
\newcommand{\rdelimP}{\}}
\newcommand{\ldelimE}{[}
\newcommand{\rdelimE}{]}
\newcommand{\ldelimVar}{(}
\newcommand{\rdelimVar}{)}
\DeclarePairedDelimiterXPP{\P}[1]{\mathbf{P}}{\ldelimP}{\rdelimP}{}{%
  
  #1}
\DeclarePairedDelimiterXPP{\E}[1]{\mathbf{E}}{\ldelimE}{\rdelimE}{}{%
  
  #1}
\DeclarePairedDelimiterXPP{\Var}[1]{\mathbf{Var}}{\ldelimVar}{\rdelimVar}{}{%
  
  #1}
\def\k/{-$k$}
\def\1/{-$1$}
\def\mgk/{M/G/$k$}
\def\mg1/{M/G/$1$}
\def\mmk/{M/M/$k$}
\newcommand{\indicator}{\mathds{1}}
\newcommand{\A}{{(1)}}
\newcommand{\B}{{(k)}}
\newcommand{\st}{\textrm{\textnormal{st}}}
\newcommand{\srpt}{\textrm{\textnormal{SRPT\1/}}}
\newcommand{\srptk}{\textrm{\textnormal{SRPT\k/}}}
\newcommand{\psjf}{\textrm{\textnormal{PSJF\1/}}}
\newcommand{\psjfk}{\textrm{\textnormal{PSJF\k/}}}
\newcommand{\rsk}{\textrm{\textnormal{RS\k/}}}
\newcommand{\rs}{\textrm{\textnormal{RS\1/}}}
\newcommand{\fb}{\textrm{\textnormal{FB\1/}}}
\newcommand{\fbk}{\textrm{\textnormal{FB\k/}}}
\newcommand{\rwork}{\mathtt{RelWork}}
\newcommand{\vwork}{\mathtt{VirtWork}}
\newcommand{\optk}{\textnormal{OPT\k/}}
\newcommand{\busy}{\mathtt{RelBusy}}
\begin{document}

\begin{abstract}
  The Shortest Remaining Processing Time (SRPT) scheduling policy
  and its variants have been extensively studied
  in both theoretical and practical settings.
  While beautiful results are known for single-server SRPT,
  much less is known for \emph{multiserver SRPT}.
  In particular, stochastic analysis of the \mgk/ under multiserver SRPT
  is entirely open.
  Intuition suggests that multiserver SRPT
  should be optimal or near-optimal for minimizing mean response time.
  However, the only known analysis of multiserver SRPT
  is in the worst-case adversarial setting,
  where SRPT can be far from optimal.
  In this paper, we give the \emph{first stochastic analysis}
  bounding mean response time of the \mgk/ under multiserver SRPT.
  Using our response time bound,
  we show that multiserver SRPT has \emph{asymptotically optimal}
  mean response time in the heavy-traffic limit.
  The key to our bounds is a strategic combination
  of stochastic and worst-case techniques.
  Beyond SRPT,
  we prove similar response time bounds and optimality results
  for several other multiserver scheduling policies.
\end{abstract}

\maketitle

\section{Introduction}

The Shortest Remaining Processing Time (SRPT) scheduling policy
and variants thereof have been deployed in many computer systems,
including web servers \cite{Harchol-Balter:2003:SSI:762483.762486},
networks \cite{mangharam2003size},
databases \cite{guirguis2009adaptive},
operating systems \cite{bunt1976scheduling}
and FPGA layout systems \cite{10.1007/11596356_50}.
SRPT has also long been a topic of fascination for queueing theorists
due to its optimality properties.
In 1966, the mean response time for SRPT was first derived \cite{schrage1966queue},
and in 1968 SRPT was shown to minimize mean response time
both in a stochastic sense and in a worst-case sense \cite{schrage1968letter}.
However, these beautiful optimality results and the analysis of SRPT
are only known for \emph{single-server} systems.
Almost nothing is known for \emph{multiserver} systems, such as the \mgk/,
even for the case of just $k=2$ servers.

The SRPT policy for the \mgk/ is defined as follows:
at all times,
the $k$ jobs with smallest remaining processing time receive service,
preempting jobs in service if necessary.

We assume a central queue, meaning any job can be
dispatched or migrated to any server at any time,
and a preempt-resume model,
meaning preemption incurs no cost or loss of work.

It seems believable that SRPT should minimize mean response time in multiserver systems
because it gives priority to the jobs which will finish soonest,
which seems like it should minimize the number of jobs in the system.
However, it was shown in 1997 that SRPT is not optimal for
multiserver systems in the worst case \cite{leonardi1997approximating, Leonardi}.
That is, one can come up with an adversarial arrival sequence for which
the mean response time under SRPT is larger that the optimal mean response time.
In fact, the ratio by which SRPT's mean response time
exceeds the optimal mean response time can be arbitrarily large
\cite{leonardi1997approximating, Leonardi}.

The fact that multiserver SRPT is not optimal in the worst case
provokes a natural question about the \emph{stochastic} case.
\begin{quote}
  \emph{Is SRPT optimal or near-optimal
    for minimizing mean response time in the the \mgk/?}
\end{quote}
Unfortunately, this question is entirely open.  Not only is it not
known whether SRPT is optimal,
but multiserver SRPT has also eluded stochastic analysis.
\begin{quote}
  \emph{What is the mean response time for the \mgk/ under SRPT?}
\end{quote}
The purpose of this paper is to answer both of these questions
in the high-load setting.
Under low load, response time is dominated by service time,
which is not affected by the scheduling policy.
In contrast, under high load, response time is dominated by queueing time,
which can vary wildly under different scheduling policies.
We thus focus on the high-load setting,
and specifically on the \emph{heavy-traffic limit} as load approaches capacity.

Our main result is that,
under mild assumptions on the service requirement distribution,
\begin{quote}
  \emph{SRPT is an optimal multiserver policy
    for minimizing mean response time in the \mgk/
    in the heavy-traffic limit.}
\end{quote}
We also give the
\emph{first mean response time bound for the \mgk/ under SRPT}.
The bound is valid for all loads and is tight for load near capacity.

In addition to SRPT,
we give the
\emph{first mean response time bounds for the \mgk/
  with three other scheduling policies},
specifically Preemptive Shortest Job First (PSJF)
\cite{wierman2003classifying},
Remaining Size Times Original Size (RS)
\cite{wierman2005nearly, Hyytia:2012:MSH:2254756.2254763},
and Foreground-Background (FB)
\cite{nuyens2008foreground}.
Our bounds imply that in the heavy-traffic limit,
under the same mild assumptions as for SRPT above,
\begin{itemize}
\item
  multiserver PSJF and RS are also optimal multiserver scheduling policies; and
\item
  multiserver FB is optimal in the same setting where single-server FB is optimal
  \cite{righter1989scheduling},
  which is when the service requirement distribution has decreasing hazard rate
  and the scheduler does not have access to job sizes.
\end{itemize}

Our approach to analyzing SRPT on $k$~servers
is to compare its performance to that of
SRPT on a single server which is $k$ times as fast,
where both systems have the same arrival rate~$\lambda$
and service requirement distribution~$S$.
Specifically,
let \emph{SRPT\k/} be the policy which uses multiserver SRPT
on $k$ servers of speed $1/k$,
as shown in Figure~\ref{fig:server_diagram}.
Ordinary SRPT on a single server is simply SRPT\1/.
The \emph{system load} $\rho = \lambda\E{S}$ is
the average rate at which work enters the system.
The maximal total rate at which the $k$ servers can do work is~$1$,
so the system is stable for $\rho < 1$, which we assume throughout.

\begin{figure}
  \centering
  \begin{tikzpicture}[semithick, scale=0.3]
  \begin{scope}
    \node at (7.5, 5.25) {\textsc{Single-Server System}};
    \draw (10.5, 0) circle (1.5);
    \node[below] at (10.5, -1.4) {speed $1$};
    \draw[->] (12, 0) -- ++(0.75, 0);
    \draw (5, -1.25) -- (9, -1.25) -- (9, 1.25) -- (5, 1.25);
    \draw (7.5, -1.25) -- ++(0, 2.5);
    \draw (6, -1.25) -- ++(0, 2.5);
    \draw[->] (4.25, 0) node [left] {$\lambda$} -- ++(0.75, 0);
  \end{scope}
  \begin{scope}[shift={(12, 0)}]
    \node at (9, 5.25) {\textsc{$k$-Server System}};
    \newcommand{\server}[2]{%
      \draw (9, #1) -- (12, #2) ++(0.5, 0) circle (0.5)
        ++(0.7, -0.35) node [below] {speed $1/k$} ++(-0.2, 0.35) edge[->] ++(0.5, 0);}
    \server{0.98}{3.5}
    \server{0.21}{0.75}
    \filldraw (12.5, -1.4) circle (1.5mu) ++(0, -0.5) circle (1.5mu) ++(0, -0.5) circle (1.5mu);
    \server{-0.98}{-3.5}
    \draw (5, -1.25) -- (9, -1.25) -- (9, 1.25) -- (5, 1.25);
    \draw (7.5, -1.25) -- ++(0, 2.5);
    \draw (6, -1.25) -- ++(0, 2.5);
    \draw[->] (4.25, 0) node [left] {$\lambda$} -- ++(0.75, 0);
  \end{scope}
\end{tikzpicture}

  \caption{Single-server and $k$-server systems}
  \label{fig:server_diagram}
\end{figure}

Our main result is that in the $\rho \to 1$ limit,
the mean response time under SRPT\k/, $\E[\big]{T^\srptk}$,
approaches the mean response time under SRPT\1/, $\E[\big]{T^\srpt}$.
Because SRPT\1/ minimizes response time among all scheduling policies,
this means that SRPT\k/ is asymptotically optimal among $k$-server policies.
In particular, let OPT\k/ be the optimal $k$-server policy. Then
\begin{equation*}
  \E[\big]{T^\srpt} \le \E[\big]{T^\optk} \le \E[\big]{T^\srptk},
\end{equation*}
so showing that $\E[\big]{T^\srptk}$ approaches $\E[\big]{T^\srpt}$ as $\rho \to 1$
also shows that $\E[\big]{T^\srptk}$ approaches $\E[\big]{T^\optk}$ as $\rho \to 1$.

Specifically, we prove the following sequence of theorems.

Our first theorem is an upper bound on
the mean response time of a job of size~$x$ under SRPT\k/,
written $\E[\big]{T^\srptk(x)}$.
As in the classic SRPT\1/ analysis \cite{schrage1966queue},
the response time of a job of size~$x$ depends on
the system load contributed by jobs of size at most~$x$,
written~$\rho_{\le x}$ (see Definition~\ref{def:busy_period}).

\begin{reptheorem}{thm:improved-response}
  In an \mgk/,
  the mean response time of a job of size~$x$ under SRPT\k/ is bounded by
  \[\E[\big]{T^\srptk(x)} \le \frac{\int_0^x \lambda t^2f_S(t) \,dt}{2(1-\rho_{\le x})^2}
  + \frac{k\rho_{\le x}x}{1-\rho_{\le x}}
  + \int_0^x \frac{k}{1-\rho_{\le t}} \,dt,\]
  where $f_S(\cdot)$ is the probability density function of
  the service requirement distribution~$S$.
\end{reptheorem}

The bound given in Theorem~\ref{thm:improved-response}
holds for any load~$\rho$ and any service requirement distribution~$S$.
We use this bound to prove that, under mild conditions on~$S$,
the performance of SRPT\k/ approaches that of SRPT\1/
in the $\rho \to 1$ limit,
which implies asymptotic optimality of SRPT\k/.

\begin{reptheorem}{thm:ratio}
  In an \mgk/ with any service requirement distribution~$S$ which is either
  \begin{enumerate*}[(i)]
  \item
    bounded or
  \item
    unbounded with a tail function which has upper Matuszewska index\footnote{This technical condition is roughly equivalent to finite variance. See Section~\ref{sec:matuszewska} or Appendix~\ref{app:matuszewska}.} less than~$-2$,
  \end{enumerate*}
  \[\lim_{\rho \rightarrow 1} \frac{\E[\big]{T^\srptk}}{\E*{T^\srpt}} = 1.\]
\end{reptheorem}

The technique by which we bound response time under SRPT\k/
is widely generalizable.
We also use it to give mean response time bounds and optimality results
for PSJF\k/, RS\k/, and FB\k/
(see Section~\ref{sec:other-policies}).

Our approach is inspired by two very different worlds:
the stochastic world and the adversarial worst-case world.
Purely stochastic approaches are
difficult to generalize to the \mgk/ for many reasons,
including the fact that multiserver systems are not work-conserving.
Purely adversarial worst-case analysis is easier
but leads to weak bounds when directly applied to the stochastic setting.
For instance, \citet{leonardi1997approximating, Leonardi} show that for an adversarial arrival sequence,
SRPT\k/ has worse mean response time than
the optimal offline $k$-server policy
by a factor of $\Omega(\log(\min(n/k,P))$,
where $n$ is the total number of jobs in the arrival sequence
and $P$ is the ratio of the smallest and largest job sizes.
This factor can be arbitrarily large in the context of the \mgk/,
because $n \to \infty$ if the arrival sequence is an infinite Poisson process,
and $P \to \infty$ if the service requirement distribution is unbounded
or allows for arbitrarily small jobs.

What makes our analysis work is a
strategic combination of the stochastic and worst-case techniques.
We use the more powerful stochastic tools where possible
and use worst-case techniques to bound variables for which
exact stochastic analysis is intractable.

\section{Prior Work}
Countless papers have been published on the stochastic analysis of the SRPT policy
in the single-server model over the last 52 years,
beginning in 1966 with \citeauthor{schrage1966queue}'s response time analysis
of the \mg1/ queue under SRPT \cite{schrage1966queue},
which was followed shortly by
the proof of SRPT's optimality \cite{schrage1968letter}.
SRPT remains a major topic of study today.
There have been beautiful works on analyzing the tail of response time
\cite{borst2002heavy, borst2003impact, Boxma},
the fairness of SRPT
\cite{bansal2001analysis, wierman2003classifying}
and SRPT in different models,
such as energy-aware control \cite{10.1007/978-3-319-43425-4_7}.

However, all of these works analyze \emph{single-server} SRPT.
We give the first analysis of multiserver SRPT.
While single-server SRPT minimizes mean response time, multiserver SRPT does not\footnote{%
  It has been claimed that multiserver SRPT is optimal
  under the additional assumption that all servers are busy at all times
  \cite[Theorem~2.1]{down2006multi}.
  However, the proof has an error.
  See Appendix~\ref{app:interchange}.} \cite{leonardi1997approximating, Leonardi}.
We show that multiserver SRPT approaches optimality in heavy traffic.

\subsection{Single-Server SRPT in Heavy Traffic}
\label{sec:matuszewska}

While the exact mean response time analysis of single-server SRPT is known, it is in the form of a triply nested integral. Therefore, it is useful to have a simpler formula for mean response time. Many papers have derived such a formula under heavy traffic \cite{Lin, bansal2005average, bansal2006handling, down2009fluid}.

Heavy traffic analysis describes the behavior of a queueing system in the limit as load approaches capacity. The most general heavy-traffic analysis of the mean response time of single-server SRPT is due to \citet{Lin}, who characterize the asymptotic behavior of mean response time for general service requirement distributions.
They consider three categories of service requirement distributions and give an asymptotic analysis of the mean response time of each:
\begin{itemize}
\item bounded distributions,
\item distributions whose tail has upper Matuszewska index\footnote{%
    See Appendix~\ref{app:matuszewska}.} less than~$-2$, and
\item distributions whose tail has lower Matuszewska index greater than~$-2$.
\end{itemize}
The first and second categories above roughly correspond to the distribution having finite variance, while the third roughly corresponds to the distribution having infinite variance.

In this paper, we restrict our heavy-traffic results to the first two categories, focusing on service requirement distributions that are either bounded or whose tails have upper Matuszewska index less than $-2$.
We build on the work of \citet{Lin} to give the first heavy-traffic analysis of multiserver SRPT. In particular, we demonstrate that in the heavy-traffic limit, the mean response time of SRPT in a multiserver system with $k$ servers approaches that of SRPT in a single-server system which runs $k$ times faster (see Figure~\ref{fig:server_diagram}).

\subsection{The Multiserver Priority Queue}

While there is no existing stochastic analysis of multiserver SRPT,
there is some analysis of multiserver priority queues.
In a multiserver priority queue,
it is assumed that there are finitely many classes of jobs (typically two)
with exponential or phase-type service requirement distributions.
Thus, the system can be modeled as a multidimensional Markov chain.
\Citet{Mitrani} give an exact analysis of the two class multiserver system with preemptive priority between the job classes and exponential service times within each class.
\Citet{Sleptchenko} extend this analysis to hyperexponential service requirement distributions,
and \citet{Osogami} extend it further still to support phase-type service requirement distributions and any constant number of preemptive priority classes.
However, the solutions found through these extensions can take a very long time to calculate, requiring more time with every added server, priority class, or state in the phase-type distribution.

Our analysis goes beyond the multiclass setting by handling an arbitrary service requirement distribution and a policy, namely SRPT\k/, with an infinite set of priorities.
Furthermore, our analysis produces a \emph{closed-form} result,
in contrast to the numerical results of these prior works.

\subsection{Multiserver SRPT in the Worst Case}

While stochastic analysis of mutliserver SRPT is open,
multiserver SRPT has been well studied in the worst-case setting.
Worst-case analysis considers an \emph{adversarially chosen} sequence of job arrival times and service requirements.
An online policy (which does not know the arrival sequence) such as SRPT\k/ is typically compared to the optimal offline policy (which knows the arrival sequence). In the worst-case setting, a policy is a $c$-approximation if its mean response time is at most $c$ times the mean response time of the offline optimal policy on any arrival sequence.

\citet{leonardi1997approximating, Leonardi} analyze SRPT\k/ in the worst-case setting
under the assumptions that
\begin{enumerate*}[(1)]
\item
  there are $n$ jobs in the arrival sequence and
\item
  the ratio of the largest and smallest service requirements in the arrival sequence is~$P$.
\end{enumerate*}
They show that SRPT\k/ is an $O(\log(\min(n/k, P)))$-approximation for mean response time, where $n$ is the total number of jobs.
They also show that any online policy is at least an $\Omega(\log(\min(n/k, P)))$-approximation.
This shows that no online policy has a better approximation ratio than SRPT\k/ by more than a constant factor.

Unfortunately, directly applying the $O(\log(\min(n/k, P)))$ bound on SRPT\k/ to the \mgk/ is not helpful for two reasons.
First, the arrival process is an infinite Poisson process, so $n \to \infty$.
Second, often the maximum job size is unbounded or the minimum job size is arbitrarily small, so $P \to \infty$ as well.

SRPT has also been considered in other multiserver models.
For example, \citet{Avrahami:2003:MTF:777412.777415} analyze the immediate dispatch setting,
in which each server has a queue and jobs are dispatched to these queues on arrival.
Each server can only serve the jobs in its queue, and jobs cannot migrate between queues.
Within each queue, jobs are served according to SRPT.
\Citet{Avrahami:2003:MTF:777412.777415} give a dispatch policy called IMD
which achieves the same $O(\log(\min(n/k, P)))$-approximation as SRPT\k/,
even when compared to the optimal offline policy with migrations.
Again, directly applying this to the \mgk/ is problematic because $n \to \infty$ and $P \to \infty$.

In contrast with these worst-case results, we show that in the stochastic setting, SRPT\k/ is asymptotically optimal policy for mean response time in the heavy-traffic limit. Our result holds for an extremely general class of service requirement distributions, including distributions which are unbounded and/or have arbitrarily small jobs.

\subsection{Other Prior Work}

\Citet{Gong} propose a \emph{single-server} policy called K-SRPT which is superficially similar to our SRPT\k/. Specifically, K-SRPT shares the processor between the $k$ jobs in the system with least remaining time. That is, K-SRPT is a hybrid of processor sharing (PS) and SRPT. Crucially, when fewer than $k$ jobs are in the system, K-SRPT allows each job to receive an increased share of the maximum service rate, ensuring work conservation.
In contrast, our SRPT\k/ model never allows a job to receive more than $1/k$ of the maximum service rate of the system, since a job cannot run on more than one server at once. This means SRPT\k/ is not work-conserving, which makes it difficult to analyze.

\section{Model}
We study scheduling policies for the \mgk/ queue. We write $\lambda$ for the arrival rate, $S$~for the service requirement distribution, and $k$ for the number of servers.
The rate at which any given server completes work is~$1/k$.
That is, a job with a service requirement, or \emph{size},
of $x$ needs to be served for time~$kx$ to complete.
The $k$ servers all together have total service rate~$1$.

The \emph{load} of the \mgk/ system, namely the average rate at which work arrives, is
\begin{equation*}
  \rho = \lambda\E{S}.
\end{equation*}
That is, jobs arrive at rate~$\lambda$ jobs per second,
each contributing $\E{S}$ work in expectation.
We can view $\E{S} = 1/(k\mu)$,
where $1/\mu$ is the expected amount of time a job needs to be served to complete.
We assume a stable system, meaning $\rho < 1$,
and a preempt-resume model, meaning that preemption incurs no cost or loss of work.

We will analyze systems in the \emph{heavy-traffic limit},
which is the limit as $\rho \rightarrow 1$.
More precisely, this is the limit as $\lambda \rightarrow 1/\E{S}$ for fixed~$S$.

We analyze and compare systems with $k=1$ and general $k$.
An example of each is shown in Figure~\ref{fig:server_diagram}.
Note that in our model,
the \mg1/ and \mgk/ systems have the same load~$\rho$.

The primary policy we study is the SRPT\k/ policy, which is the Shortest Remaining Processing Time policy on $k$ servers. At every moment in time, SRPT\k/ serves the $k$ jobs with smallest remaining processing time. If there are fewer than $k$ jobs in the system, every job receives service, which leaves some servers idle.
Note SRPT\1/ is the usual single-server SRPT policy.

\section{Background and Challenges}

Our approach to analyzing response time under SRPT\k/
is to compare it with SRPT\1/.
As such, we begin this section by briefly reviewing the analysis of SRPT\1/,
specifically focusing on the definitions and formulations
that will come up in the SRPT\k/ analysis.
We then outline why the SRPT\1/ analysis
does not easily generalize to SRPT\k/ with $k > 1$ servers.

\subsection{SRPT-1 Tagged Job Tutorial}
\label{sub:srpt-1}

We now review the technique used by \citet{schrage1966queue} to analyze SRPT\1/.
Consider a particular ``tagged'' job~$j$, of size $x$,
arriving to a \emph{random system state}
drawn from the system's steady-state distribution.
We denote $j$'s response time by $T^\srpt(x)$.
Of course, $T^\srpt(x)$ is a random variable
which depends on both the random arrivals that occur after~$j$
and the random queue state that $j$ observes upon its own arrival.

\begin{figure}
  \centering
  \begin{tikzpicture}[scale=0.3]
  \fill[green!65!black!30] (11, -0.5) rectangle (13, 0.5);
  \fill[green!65!black!30] (14, -0.5) rectangle (15, 0.5);
  \fill[green!65!black!30] (16.75, -0.5) rectangle (18, 0.5);
  \draw[->] (14.375, -2.5) -- (12.75, -0.5);
  \draw[->] (15.25, -2.5) -- (14.5, -0.5);
  \draw[->] (16.625, -2.5) -- (17.25, -0.5);
  \node at (15.325, -3.2) {$j$ in service};
  \draw[thick] (0, 0) node [label={left:$j$ arrives}] {} -- (18, 0) node [label={right:$j$ departs}] {};
  \draw[thick] (0, -0.5) -- ++(0, 1);
  \draw[thick] (11, -0.5) -- ++(0, 1);
  \draw[thick] (18, -0.5) -- ++(0, 1);
  \draw[decorate, decoration={brace, amplitude=5pt}] (0.15, 1) -- (10.85, 1);
  \node[align=center] at (5.5, 3) {waiting time\\ $W^\srpt(x)$};
  \draw[decorate, decoration={brace, amplitude=5pt}] (11.15, 1) -- (17.85, 1);
  \node[align=center] at (14.5, 3) {residence time\\ $R^\srpt(x)$};
  \draw[decorate, decoration={brace, amplitude=5pt}] (17.85, -1) -- (0.15, -1);
  \node[align=center] at (9, -3) {response time\\ $T^\srpt(x)$};
\end{tikzpicture}

  \caption{Response time of the tagged job~$j$ of size~$x$
    is the sum of waiting time and residence time}
  \label{fig:waiting-residence}
\end{figure}
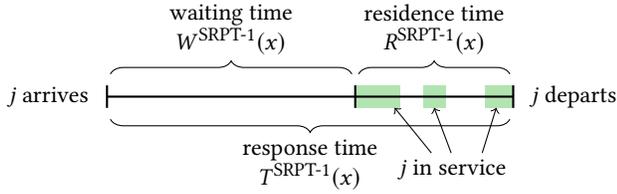

We split the analysis of $T^\srpt(x)$ into two parts, shown in Figure~\ref{fig:waiting-residence}:
\begin{itemize}
\item
  \emph{waiting time}~$W^\srpt(x)$,
  the time between $j$'s arrival and the moment $j$ first enters service;
  and
\item
  \emph{residence time}~$R^\srpt(x)$,
  the time between the moment $j$ first enters service and $j$'s departure.
\end{itemize}
Given waiting time and residence time,
response time is simply
\begin{equation*}
  T^\srpt(x) = W^\srpt(x) + R^\srpt(x).
\end{equation*}

Under SRPT\1/, $j$ has priority over all jobs with larger remaining size than itself,
so such jobs do not impact $j$'s response time.

\begin{definition}
  \label{def:relevant}
  Suppose job~$j$ has remaining size~$x$.
  A job~$\ell$ is \emph{relevant} to job~$j$
  if $\ell$ has remaining size at most~$x$.
  Otherwise $\ell$ is \emph{irrelevant} to~$j$.
\end{definition}

In particular, we will often consider which jobs are relevant to the tagged job $j$. We will simply call jobs ``relevant'' and ``irrelevant'' when the comparison is clear from context.
For the purpose of analyzing $j$'s response time,
we can ignore all jobs which are irrelevant to $j$.

During $j$'s waiting time, the server is only doing \emph{relevant work},
namely work that is due to a relevant job.
The total amount of work done is the sum of
\begin{itemize}
\item
  relevant work due to relevant jobs that were in the system when $j$ arrived
  and
\item
  relevant work due to relevant jobs that arrived after~$j$.
\end{itemize}
To analyze $j$'s waiting time, we make use of a concept called a ``busy period''.

\begin{definition}
  A \emph{busy period} started by (possibly random) amount of work~$V$,
  written~$B(V)$,
  is the amount of time it takes for a work-conserving system that starts with $V$ work
  to become empty.
\end{definition}

Busy periods are very useful because their length depends
only on the initial amount of work and the arrival process,
not on the service policy or the number of jobs in the system.

In the SRPT\1/ system, we do not have to wait for the system to become completely empty
for $j$ to start receiving service.
We only have to wait for the system to become empty of relevant work.
We capture this with the concept of a ``relevant busy period''.

\begin{definition}
  \label{def:busy_period}
  A \emph{relevant busy period} for a job of size~$x$ started by
  (possibly random) amount of work~$V$,
  written~$B_{\le x}(V)$,
  is the amount of time it takes for
  a work-conserving system that starts with $V$ work to become empty,
  where only arrivals of size at most~$x$, the relevant arrivals,
  are admitted to the system.
  A relevant busy period has expectation
  \begin{equation*}
    \E{B_{\le x}(V)} = \frac{\E{V}}{1 - \rho_{\le x}}.
  \end{equation*}
  Above, $\rho_{\le x}$ is the \emph{relevant load} for a job of size~$x$,
  which is the total load due to relevant jobs.
  Its value is
  \begin{equation*}
    \rho_{\le x} = \lambda\E{S\indicator(S \le x)},
  \end{equation*}
  where $\indicator(\cdot)$ is the indicator function.
\end{definition}

This means $j$'s waiting time is a relevant busy period
started by the amount of relevant work that the tagged job $j$ sees on arrival.
By the PASTA property (Poisson Arrivals See Time Averages)
\cite{pasta_wolff},
the distribution for the amount of relevant work $j$ sees
is the steady-state distribution.

\begin{definition}
  \label{def:rwork}
  The \emph{steady-state relevant work} for a job of size~$x$ under SRPT\1/,
  written $\rwork^\srpt_{\le x}$,
  is the sum of remaining sizes of all jobs with remaining size at most~$x$
  observed at a random point in time.
  (An analogous definition applies to SRPT\k/.)
\end{definition}

By the above discussion, $j$'s waiting time is
\begin{equation*}
  W^\srpt(x) = B_{\le x}\bigl(\rwork^\srpt_{\le x}\bigr).
\end{equation*}
The analysis of $\rwork^\srpt_{\le x}$ is known \cite{schrage1966queue}
but outside the scope of this tutorial.

The residence time of~$j$ can be analyzed in a similar way.
At the start of $j$'s residence time,
the SRPT\1/ policy serves $j$, so $j$, which has remaining~$x$,
must be the job with the smallest remaining size in the system.
This means the system is effectively empty from $j$'s perspective,
because all work relevant to $j$ is gone.

The only work that will be done from this point until $j$ is completes is
work on $j$ itself and relevant arrivals.
Because $j$'s residence time starts with its own work~$x$
and ends when that work is done,
we can stochastically upper bound $j$'s residence time
as a relevant busy period:
\begin{equation*}
  R^\srpt(x) \le_\st B_{\le x}(x).
\end{equation*}
The reason this bound is not tight is because
$j$'s remaining size decreases during service,
which changes the cutoff for relevant jobs.
An exact analysis of $R^\srpt(x)$ is known \cite{schrage1966queue}
but outside the scope of this tutorial.

\subsection{Why the Tagged Job Analysis is Hard for SRPT\k/}
\label{sec:srpt-k-hard}
Having summarized the analysis of SRPT\1/, it is natural to ask:
why does a similar strategy not work for SRPT\k/?
The primary difficulty is that multiserver systems are
\emph{not work-conserving},
which manifests in two ways.

First, analyzing \emph{busy periods} relies on work conservation,
namely the fact that the server is doing work at rate~$1$ whenever the system is not empty.
This allows for many simplifications.
For instance, in Definition~\ref{def:busy_period},
we define busy periods as being started as a total amount of work,
without worrying exactly how that work is divided among jobs.
In a $k$-server system, work is only done at rate~$1$ if there are $k$ or more jobs in the system.
Thus, the exact rate at which work is done varies over time
depending on the number of jobs in the system, making it difficult to analyze.

Second, analyzing the \emph{steady-state relevant work}
relies on work conservation.
The analysis of $\rwork^\srpt_{\le x}$ by \citet{schrage1966queue}
relies on being able to equate $\rwork^\srpt_{\le x}$ to the total work
in a simpler first-come-first-served system.
Equality of remaining work only holds if both systems are work-conserving.
The fact that SRPT\k/ is not work-conserving means that we can't make such an argument.

\section{Analysis of SRPT\k/}
\label{sec:analysis}

As explained in Section~\ref{sec:srpt-k-hard}, traditional tagged job analysis
cannot be applied to SRPT\k/ because SRPT\k/ is not work-conserving.
Our approach is to find a way to make SRPT\k/ appear work-conserving
while the tagged job~$j$ is in the system.
We do this by introducing the
new concept of \emph{virtual work}.
Virtual work encapsulates all of the time that the servers spend either idle
or working on irrelevant jobs while $j$ is in the system.
By thinking of these times as ``virtual work'',
the system appears to be work-conserving while $j$ is in the system,
allowing us to bound the response time of~$j$.

Consider a tagged job~$j$ of size~$x$.
Recall from Definition~\ref{def:relevant}
that only jobs of remaining size at most~$x$
are \emph{relevant} to $j$ when $j$ arrives.
We will bound $j$'s response time by bounding
the \emph{total amount of server activity} between $j$'s arrival and departure.
Between $j$'s arrival and departure,
each server can be doing one of four categories of work.
\begin{itemize}
\item
  \emph{Tagged work:}
  serving~$j$.
\item
  \emph{Old work:}
  serving a job which is relevant to $j$ that was in the system upon~$j$'s arrival.
\item
  \emph{New work:}
  serving a job which is relevant to $j$ that arrived after~$j$.
\item
  \emph{Virtual work:}
  either idling or serving an job which is irrelevant to $j$.
\end{itemize}
The response time of~$j$ is exactly the total of
tagged, old, new, and virtual work.
The main idea behind our analysis is to bound this total
by a single (work-conserving) relevant busy period
(see Definition~\ref{def:busy_period}).

We already know a few facts about the four categories of work.
\begin{itemize}
\item
  Tagged work is $j$'s size~$x$.
\item
  Old work is equal to
  the amount of relevant work seen by $j$ upon arrival.\footnote{%
    One might worry that an \emph{old job} that is irrelevant when $j$ arrives
    could later become relevant to~$j$,
    and therefore be part of old work,
    but this does not occur under SRPT\k/.}
  By the PASTA property \cite{pasta_wolff},
  this is $\rwork^\srptk_{\le x}$,
  the steady state amount of relevant work for a job of size~$x$
  (see Definition~\ref{def:rwork}).
\item
  New work is bounded by all jobs which are relevant to a job of remaining size $x$
  that arrive during a
  relevant busy period $B_{\le x}(\cdot)$ started by
  tagged, old, and virtual work.\footnote{%
    One might worry that a \emph{new job} that is irrelevant when it arrives
    could later become relevant to~$j$,
    and therefore be part of new work,
    but this does not occur under SRPT\k/.}
  This is only an upper bound because we ignore the fact that
  $j$'s remaining size decreases as $j$ is served,
  which changes the size cutoff for relevant jobs.
\item
  Virtual work is as of yet unknown.
  We denote with the random variable $\vwork^\srptk(x)$
  the amount of virtual work done while $j$ is in the system.
\end{itemize}
Taken together, these yield the bound
\begin{equation}
  \label{eq:busy_bound}
  T^\srptk(x)
  \le_\st
  B_{\le x}\Bigl(x + \rwork^\srptk_{\le x} + \vwork^\srptk(x)\Bigr).
\end{equation}
Our task in the remainder of this section
is to bound $\rwork^\srptk_{\le x}$ and $\vwork^\srptk(x)$
as tightly as we can.
We use worst-case methods to bound $\vwork^\srptk(x)$
and a combination of stochastic
and worst-case methods to bound $\rwork^\srptk_{\le x}$.

\subsection{Virtual Work}

We start by bounding $\vwork^\srptk(x)$,
the virtual work done while $j$ is in the system.
A purely stochastic analysis of virtual work would be very difficult.
Fortunately, a simple worst-case bound suffices for our purposes.
The key is that a server can do virtual work
\emph{only while $j$ is in service at a different server}.
This is because SRPT\k/ never allows an irrelevant job
to have priority over~$j$.

\begin{lemma}
  \label{lem:vwork}
  The virtual work is bounded by
  \begin{equation*}
    \vwork^\srptk(x) \le (k - 1)x.
  \end{equation*}
\end{lemma}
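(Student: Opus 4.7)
The plan is to prove this by a simple worst-case accounting argument, exploiting two structural facts about SRPT\k/: that irrelevant jobs never get priority over $j$, and that $j$ itself can occupy at most one server at a time.

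First I would establish the key claim hinted at in the paper, namely that virtual work accumulates only during the intervals when $j$ is itself in service. Suppose $j$ is in the system but not being served. By definition of SRPT\k/, the $k$ jobs with smallest remaining size are being served; since $j$ is not among them, there must be $k$ jobs in service each of remaining size at most $x$, i.e.\ each relevant to $j$. In particular no server is idle and no server is serving an irrelevant job, so no virtual work is being accumulated at that instant. Contrapositively, any instant at which some server is doing virtual work is an instant at which $j$ is in service at some (unique) server.

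Next I would bound the total duration of these intervals. Because $j$ has size $x$ and each server runs at speed $1/k$, the total amount of time that $j$ spends in service over its sojourn is exactly $kx$. At every such instant exactly one server is doing tagged work on $j$, so at most $k-1$ of the remaining servers can simultaneously be accumulating virtual work, each at rate $1/k$. Integrating, the total virtual work accumulated while $j$ is in the system is at most
\[
  \vwork^\srptk(x) \;\le\; (k-1) \cdot \frac{1}{k} \cdot kx \;=\; (k-1)x,
\]
as claimed.

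I do not anticipate a genuine obstacle here: the only subtlety is making sure the ``irrelevant jobs never preempt $j$'' property of SRPT\k/ is invoked cleanly to rule out virtual work while $j$ waits in the queue. Once that observation is in place, the argument is a one-line time-versus-rate accounting.
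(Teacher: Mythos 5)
Your proof is correct and follows essentially the same approach as the paper's: establish that virtual work accrues only while $j$ is in service, then bound that duration by $kx$ with at most $k-1$ servers each at rate $1/k$. The only difference is that you spell out in more detail why no virtual work accrues while $j$ waits, which the paper states without elaboration.
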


\begin{proof}
  Virtual work only occurs while $j$ is in service.
  The maximum possible virtual work is achieved by
  all $k - 1$ other servers doing virtual work whenever $j$ is in service.
  Each server does work at rate~$1/k$.
  This means $j$ is in service for time~$kx$,
  during which virtual work is done at rate at most $(k - 1)/k$.
\end{proof}

\subsection{Relevant Work}
\label{sec:srpt-rwork}

Our next task is to bound $\rwork^\srptk_{\le x}$,
the steady state amount of relevant work for a job of size~$x$ under SRPT\k/.
As with virtual work,
a purely stochastic analysis of relevant work would be very difficult.
We therefore take the following hybrid approach.
We consider a pair of systems,
one using SRPT\1/ and the other using SRPT\k/,
experiencing the same arrival sequence.
We compare the amounts of relevant work in each system,
giving a \emph{worst-case bound for the difference}.
This allows us to use
the previously known \emph{stochastic analysis} of $\rwork^\srpt_{\le x}$
to give a stochastic bound for $\rwork^\srptk_{\le x}$.

Consider running a pair of systems under the same job arrival sequence:
\begin{itemize}
\item
  \emph{System~$1$}, which schedules using SRPT\1/; and
\item
  \emph{System~$k$}, which schedules using SRPT\k/.
\end{itemize}
For any time~$t$,
let $\rwork^\A_{\le x}(t)$ be the amount of relevant work in System~$1$ at~$t$,
and similarly for $\rwork^\B_{\le x}(t)$.
Our goal is to give a worst-case bound for
the difference in relevant work between Systems~$1$ and~$k$,
\begin{equation*}
  \Delta_{\le x}(t) = \rwork^\B_{\le x}(t) - \rwork^\A_{\le x}(t).
\end{equation*}

To bound $\Delta_{\le x}(t)$, we split times~$t$ into two types of intervals:
\begin{itemize}
\item
  \emph{few-jobs intervals}, during which there are
  fewer than~$k$ relevant jobs at a time in System~$k$;
  and
\item
  \emph{many-jobs intervals}, during which there are
  at least~$k$ relevant jobs at a time in System~$k$.
\end{itemize}
A similar type of splitting was used by
\citet{leonardi1997approximating, Leonardi}.

As a reminder, a job is \emph{relevant} if its remaining size is at most~$x$
and \emph{irrelevant} otherwise
(see Definition~\ref{def:relevant}).
Note that many-jobs intervals are defined only in terms of System~$k$,
so System~$1$ may or may not have relevant jobs during a many-jobs interval.

\begin{lemma}
  \label{lem:delta}
  For any arrival sequence and at any time~$t$,
  the difference between the relevant work in System~$1$
  and the relevant work in System~$k$ is bounded by
  \begin{equation*}
    \Delta_{\le x}(t) \le kx.
  \end{equation*}
\end{lemma}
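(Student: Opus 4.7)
The plan is to partition time into \emph{few-jobs intervals}, during which System~$k$ has fewer than $k$ relevant jobs, and \emph{many-jobs intervals}, during which it has at least $k$. I would bound $\Delta_{\le x}$ on each type of interval separately and then stitch the bounds together through the transitions.

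On few-jobs intervals, the bound is immediate: System~$k$ contains at most $k-1$ relevant jobs, each of remaining size at most~$x$, so $\rwork^\B_{\le x}(t) \le (k-1)x$, and since $\rwork^\A_{\le x}(t) \ge 0$ we get $\Delta_{\le x}(t) \le (k-1)x < kx$.

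For many-jobs intervals, I would show that $\Delta_{\le x}$ is non-increasing. Between jumps, $\rwork^\B_{\le x}$ decreases at rate~$1$ because all $k$ servers of System~$k$ serve relevant jobs, while $\rwork^\A_{\le x}$ decreases at rate at most~$1$, so the derivative of $\Delta_{\le x}$ is nonpositive. Arrivals of size at most~$x$ add the same amount to both systems and leave $\Delta_{\le x}$ unchanged; arrivals of size greater than~$x$ affect neither quantity. Crucially, System~$k$ never serves an irrelevant job while there are $\ge k$ relevant ones, so irrelevant-to-relevant threshold crossings can occur only in System~$1$ during a many-jobs interval, and each such crossing adds $x$ to $\rwork^\A_{\le x}$, which only decreases $\Delta_{\le x}$.

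To close the argument, I would control $\Delta_{\le x}$ at the start of each many-jobs interval. Such an interval begins when the relevant-job count in System~$k$ rises from $k-1$ to~$k$, triggered either by an arrival of size $s \le x$ (which adds $s$ to both systems, leaving $\Delta_{\le x}$ at most $(k-1)x$) or by a threshold crossing in System~$k$ (which adds exactly $x$ to $\rwork^\B_{\le x}$, raising $\Delta_{\le x}$ from at most $(k-1)x$ to at most $kx$). Combined with $\Delta_{\le x}(0) = 0$ and the non-increasing behavior on many-jobs intervals, this propagates $\Delta_{\le x}(t) \le kx$ to all~$t$. The main obstacle is the careful bookkeeping of the discontinuous jumps at threshold crossings and interval boundaries; the constant $kx$ is tight precisely because a many-jobs interval can open via a System-$k$ threshold crossing that adds $x$ to $\rwork^\B_{\le x}$ with no compensating decrease in $\rwork^\A_{\le x}$.
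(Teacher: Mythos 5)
Your proposal follows essentially the same approach as the paper: partition time into few-jobs and many-jobs intervals for System~$k$, bound $\Delta_{\le x}$ directly on few-jobs intervals by counting relevant jobs, show $\Delta_{\le x}$ is nonincreasing on many-jobs intervals, and control $\Delta_{\le x}$ at the start of each many-jobs interval. The main stylistic difference is that you argue monotonicity on many-jobs intervals instantaneously (rates of decrease between jumps, effects of arrivals, effects of threshold crossings), while the paper works with an integral decomposition of the change over $[s,t]$ into three aggregate quantities ($\mathtt{Arrivals}$, $\mathtt{NewlyRelevant}$, $\mathtt{Served}$) and compares them term by term. These are two presentations of the same idea, and both are correct.

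One small imprecision worth tightening: you describe the many-jobs interval as beginning ``when the relevant-job count in System~$k$ rises from $k-1$ to~$k$,'' and in the threshold-crossing case you add a single~$x$ to the few-jobs bound $(k-1)x$. But several irrelevant jobs in service can cross the age-$x$ threshold simultaneously (with an arbitrary arrival sequence, jobs may have identical sizes), so the count can jump from, say, $k-i$ directly to~$k$. The paper avoids this by arguing directly at time~$s$: if $i \ge 1$ irrelevant jobs are in service at $s^-$, then at most $k-i$ relevant jobs are present, so after the crossings there are at most~$k$ relevant jobs, each contributing at most~$x$, giving $\rwork^\B_{\le x}(s) \le kx$. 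Your argument can be repaired by sequentializing simultaneous crossings (after the first there are at most $k-1$ relevant jobs, still within the few-jobs regime, and so on), but the paper's direct job-count at~$s$ is cleaner and sidesteps the issue.
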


\begin{proof}
  Any time~$t$ is in either a few-jobs interval or a many-jobs interval.
  The case where $t$ is in a few-jobs interval is simple:
  there are at most $k - 1$ relevant jobs in System~$k$ at time~$t$,
  each of remaining size at most~$x$, so
  \begin{equation*}
    \Delta_{\le x}(t) \le \rwork^\B_{\le x}(t) \le (k - 1)x.
  \end{equation*}

  Suppose instead that $t$ is in a many-jobs interval.
  Let time~$s$ be the start of the many-jobs interval containing~$t$.
  We will show
  \begin{equation*}
    \Delta_{\le x}(t) \le \Delta_{\le x}(s) \le kx.
  \end{equation*}

  We first show that $\Delta_{\le x}(t) \le \Delta_{\le x}(s)$.
  Let
  \begin{align*}
    D^\A &= \rwork^\A_{\le x}(t) - \rwork^\A_{\le x}(s) \\
    D^\B &= \rwork^\B_{\le x}(t) - \rwork^\B_{\le x}(s)
  \end{align*}
  be the change in relevant work from $s$ to~$t$ in Systems~$1$ and~$k$,
  respectively.
  Because
  \begin{equation*}
    \Delta_{\le x}(t) - \Delta_{\le x}(s) = D^\B - D^\A,
  \end{equation*}
  it suffices to show $D^\B \leq D^\A$.

  We can write $D^\A$ as a sum of three components,
  \begin{equation*}
    D^\A = \mathtt{Arrivals}^\A + \mathtt{NewlyRelevant}^\A - \mathtt{Served}^\A,
  \end{equation*}
  which are defined as follows.
  \begin{itemize}
  \item
    $\mathtt{Arrivals}^\A$ is the relevant work added during $[s, t]$
    due to relevant new arrivals.
  \item
    $\mathtt{NewlyRelevant}^\A$ is the relevant work added during $[s, t]$
    due to the server serving irrelevant jobs
    until they reach remaining size~$x$,
    at which point they become relevant.
    For our purposes,
    all that matters is that $\mathtt{NewlyRelevant}^\A \ge 0$.
  \item
    $\mathtt{Served}^\A$ is the
    amount of relevant work done by the server during $[s, t]$.
    System~$1$ does relevant work at rate~$1$ if it has any relevant jobs
    and rate~$0$ otherwise,
    so $\mathtt{Served}^\A \le t - s$.
  \end{itemize}
  We define analogous quantities for System~$k$
  and compare them to their System~$1$ counterparts.
  \begin{itemize}
  \item
    $\mathtt{Arrivals}^\B = \mathtt{Arrivals}^\A$
    because the two systems experience the same arrivals.
  \item
    $\mathtt{NewlyRelevant}^\B = 0$
    because $[s, t]$ is within a many-jobs interval,
    during which System~$k$ has at least $k$ relevant jobs.
    Therefore, there is never an opportunity for an irrelevant job
    to be served and become relevant.
    In particular,
    \begin{equation*}
      \mathtt{NewlyRelevant}^\B \le \mathtt{NewlyRelevant}^\A.
    \end{equation*}
  \item
    $\mathtt{Served}^\B = t - s$
    because $[s, t]$ is within a many-jobs interval,
    during which System~$k$ has at least $k$ relevant jobs.
    Therefore, its servers do relevant work at combined rate~$1$
    during all of $[s, t]$.
    In particular,
    \begin{equation*}
      \mathtt{Served}^\B \ge \mathtt{Served}^\A.
    \end{equation*}
  \end{itemize}
  The three comparisons above imply $D^\B \le D^\A$, as desired.

  All that remains is to show $\Delta_{\le x}(s) \le kx$.
  Recall that $s$ is the start of a many-jobs interval.
  There are two ways to enter a many-jobs interval.
  In both cases, we show that $\Delta_{\le x}(s) \leq kx$.

  One way a many-jobs interval can start is when
  \emph{a relevant job arrives while System~$k$ has $k - 1$ relevant jobs}.
  The same arrival occurs in System~$1$,
  so $\Delta_{\le x}(s) = \Delta_{\le x}(s^-)$,
  where $s^-$ is the instant before the arrival.
  But $s^-$ is the end of a few-jobs interval,
  during which System~$k$ has at most $k - 1$ relevant jobs, so
  \begin{equation*}
    \Delta_{\le x}(s)
    = \Delta_{\le x}(s^-)
    \le \rwork^\B_{\le x}(s^-)
    \le (k - 1)x.
  \end{equation*}

  The other way a many-jobs interval can start is for
  \emph{irrelevant jobs already in System~$k$ to become relevant}.
  For this to happen,
  System~$k$ must be serving $i \ge 1$ irrelevant jobs at~$s^-$.
  Because relevant jobs have priority over irrelevant jobs,
  all relevant jobs must also be in service at~$s^-$.
  There are $i$ irrelevant jobs in service at~$s^-$,
  so there are at most $k - i$ relevant jobs at~$s^-$.
  At time~$s$, at most $i$ irrelevant jobs become relevant,
  so there are at most $k$ relevant jobs at~$s$.
  Each relevant job has size at most~$x$,
  so
  \begin{equation*}
    \Delta_{\le x}(s) \le \rwork^\B_{\le x}(s) \le kx.
    \qedhere
  \end{equation*}
\end{proof}

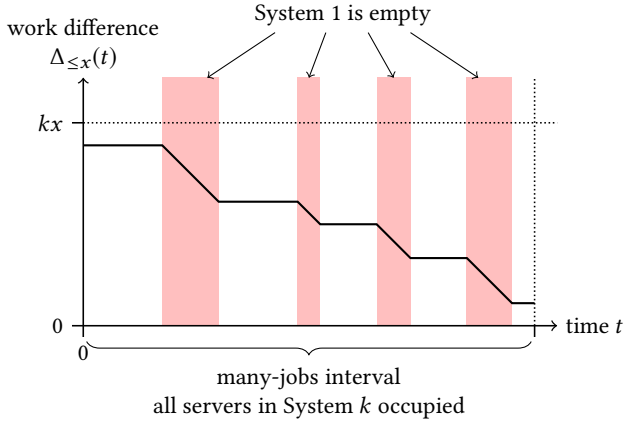
\begin{figure}
  \centering
  \begin{tikzpicture}[scale=0.3]
  \fill[red!25] (3.5, 0) rectangle (6, 11);
  \fill[red!25] (9.5, 0) rectangle (10.5, 11);
  \fill[red!25] (13, 0) rectangle (14.5, 11);
  \fill[red!25] (17, 0) rectangle (19, 11);
  \draw[->] (8.5, 13) -- (5.5, 11);
  \draw[->] (10.75, 13) -- (10, 11);
  \draw[->] (12.625, 13) -- (13.75, 11);
  \draw[->] (14.5, 13) -- (17.5, 11);
  \node at (11.5, 13.8) {System~$1$ is empty};
  \draw[semithick, ->] (-0.5, 0) node [left] {$0$} -- (21, 0)
    node [right] {time~$t$};
  \draw[semithick, ->] (0, -0.5) node [below] {$0$} -- (0, 11)
    node [above, align=center] {work difference\\$\Delta_{\le x}(t)$};
  \draw[semithick] (-0.5, 9) node [left] {$kx$} -- (0, 9);
  \draw[semithick, densely dotted] (0, 9) -- (21, 9);
  \draw[decorate, decoration={brace, amplitude=5pt}] (19.85, -0.7) -- (0.15, -0.7);
  \node[align=center] at (10, -3) {many-jobs interval\\ all servers in System~$k$ occupied};
  \draw[semithick] (20, -0.5) -- (20, 0);
  \draw[semithick, densely dotted] (20, 0) -- (20, 11);
  \begin{scope}[shift={(0, -1)}]
    \draw[thick] (0, 9) -- (3.5, 9) -- (6, 6.5) -- (9.5, 6.5) -- (10.5, 5.5) --
      (13, 5.5) -- (14.5, 4) -- (17, 4) -- (19, 2) -- (20, 2);
  \end{scope}
\end{tikzpicture}


  \caption{Relevant work difference is nonincreasing during many-jobs intervals}
  \label{fig:delta}
\end{figure}

Lemma~\ref{lem:delta} shows that $\Delta_{\le x}(t)$ is bounded at all times.
We can summarize the proof of Lemma~\ref{lem:delta} as follows.
In a few-jobs interval, $\Delta_{\le x}(t)$ is bounded
because there are few relevant jobs in System~$1$
and each contributes a bounded amount of relevant work.
In a many-jobs interval, $\Delta_{\le x}(t)$ is nonincreasing, and hence bounded.

One might intuitively expect $\Delta_{\le x}(t)$ to be constant during a many-jobs interval.
However, $\Delta_{\le x}(t)$ can decrease during a many-jobs interval,
namely when System~$1$ is empty,
as shown in Figure~\ref{fig:delta}.

\subsection{Response Time Bound}

\begin{theorem}
  \label{thm:response_time}
  In an \mgk/,
  the response time of a job of size~$x$ under SRPT\k/ is bounded by
  \begin{equation*}
    T^\srptk(x) \le_\st W^\srpt(x) + B_{\le x}(2kx),
  \end{equation*}
  where $W^\srpt(x)$ denotes the waiting time of a job of size~$x$ under SRPT\1/.
\end{theorem}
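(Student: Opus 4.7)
The plan is to start from the master inequality~(\ref{eq:busy_bound}),
\[T^\srptk(x) \le_\st B_{\le x}\bigl(x + \rwork^\srptk_{\le x} + \vwork^\srptk(x)\bigr),\]
and substitute into the initial workload of the busy period the bounds on the virtual and relevant work already established in this section. Lemma~\ref{lem:vwork} gives the deterministic bound $\vwork^\srptk(x) \le (k-1)x$. For the relevant work, I would couple an SRPT\k/ system with an SRPT\1/ system driven by the same Poisson arrival stream, both started empty. Lemma~\ref{lem:delta} then yields the pathwise inequality $\rwork^\srptk_{\le x}(t) \le \rwork^\srpt_{\le x}(t) + kx$ at every time~$t$, and by PASTA this promotes to the stochastic dominance $\rwork^\srptk_{\le x} \le_\st \rwork^\srpt_{\le x} + kx$ for the quantities seen by the tagged arrival in stationarity.

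Using the monotonicity of $B_{\le x}(\cdot)$ in its starting work and summing, the argument becomes at most $\rwork^\srpt_{\le x} + 2kx$, since $x + (k-1)x + kx = 2kx$. I would then invoke the standard additivity of relevant busy periods: for independent or deterministic amounts $a$ and $b$, $B_{\le x}(a + b) \stackrel{d}{=} B_{\le x}(a) + B_{\le x}'(b)$ with $B_{\le x}'$ an independent copy, because after the first pile of work is drained a fresh relevant busy period runs on the second pile. Applying this splitting with $a = \rwork^\srpt_{\le x}$ and $b = 2kx$, and recognizing from Section~\ref{sub:srpt-1} that $W^\srpt(x) \stackrel{d}{=} B_{\le x}(\rwork^\srpt_{\le x})$, delivers the claimed stochastic bound $T^\srptk(x) \le_\st W^\srpt(x) + B_{\le x}(2kx)$.

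I expect the main obstacle to be the lift of the \emph{worst-case} Lemma~\ref{lem:delta} to a \emph{stochastic} comparison of the steady-state values seen by the tagged job. The cleanest way to handle this is the coupling described above: since the pathwise inequality holds almost surely at every time, it passes to the joint stationary distribution of $(\rwork^\srptk_{\le x}, \rwork^\srpt_{\le x})$, giving $\rwork^\srptk_{\le x} \le \rwork^\srpt_{\le x} + kx$ almost surely and hence in the $\le_\st$ sense. A secondary point to verify is that the arrivals driving the outer $B_{\le x}(\cdot)$ after the tagged arrival are independent of $\rwork^\srpt_{\le x}$, which is automatic because the tagged job's future arrivals are independent of the steady-state system it observes on arrival. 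Neither point requires heavy machinery, but both are worth stating carefully to justify the stochastic equality with $W^\srpt(x)$ at the end.
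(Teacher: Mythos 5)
Your proposal matches the paper's proof in both structure and substance: start from the busy-period bound \eqref{eq:busy_bound}, substitute the bounds from Lemma~\ref{lem:vwork} and Lemma~\ref{lem:delta} to replace $x + \rwork^\srptk_{\le x} + \vwork^\srptk(x)$ by $\rwork^\srpt_{\le x} + 2kx$, split the busy period additively, and identify $B_{\le x}(\rwork^\srpt_{\le x})$ with $W^\srpt(x)$. The paper compresses the ``plugging in'' of the two lemmas into a single line, whereas you carefully spell out the coupling and PASTA argument needed to promote the pathwise inequality of Lemma~\ref{lem:delta} to stochastic dominance of the steady-state quantities seen by the tagged arrival; that extra care addresses exactly the point the paper leaves implicit, and is a correct and worthwhile clarification rather than a different route.
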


\begin{proof}
  From \eqref{eq:busy_bound}, we know that
  \begin{equation*}
  T^\srptk(x)
  \le_\st
  B_{\le x}\Bigl(x + \rwork^\srptk_{\le x} + \vwork^\srptk(x)\Bigr).
  \end{equation*}
  By plugging in
  Lemmas~\ref{lem:vwork} and~\ref{lem:delta},
  we find that
  \begin{align*}
    T^\srptk(x)
    &\le_\st B_{\le x}\bigl(\rwork^\srpt_{\le x} + 2kx\bigr) \\
    &= B_{\le x}\bigl(\rwork^\srpt_{\le x}\bigr) + B_{\le x}(2kx).
  \end{align*}
  Recall from Section~\ref{sub:srpt-1} that the waiting time in SRPT\1/ is
  \begin{equation*}
    W^\srpt(x) = B_{\le x}\bigl(\rwork^\srpt_{\le x}\bigr),
  \end{equation*}
  giving the desired bound.
\end{proof}

While Theorem~\ref{thm:response_time} gives a good bound on the response time under SRPT\k/, we can tighten the bound further by making use of three ideas.
\begin{itemize}
\item
  As the tagged job~$j$ is served,
  its remaining size decreases.
  This decreases the size cutoff for new arrivals to be relevant,
  so not as many arriving jobs contribute to new work.
  Our current bounds to not account for this effect.
\item
  In Lemma~\ref{lem:delta},
  we bound the difference $\Delta_{\le x}(t)$
  between relevant work in System~$1$, which uses SRPT\1/,
  and relevant work in System~$k$, which uses SRPT\k/.
  It turns out that the same proof holds when System~$1$ uses PSJF\1/,
  the preemptive shortest job first policy, instead of SRPT\1/.
  This improves the bound because waiting time under PSJF\1/
  is smaller than waiting time under SRPT\1/ \cite{wierman2005nearly}.
\item
  Even after replacing SRPT\1/ with PSJF\1/,
  Lemma~\ref{lem:delta} is not tight.
  In particular, $\Delta_{\le x}(t)$ is
  at most $x$ times the number of servers serving relevant jobs at time~$t$,
  and there are not always $k$ such servers.
\end{itemize}
These ideas allow us to prove the following tighter bound on mean response time.

\begin{theorem}
  \label{thm:improved-response}
  In an \mgk/,
  the mean response time of a job of size~$x$ under SRPT\k/ is bounded by
  \[\E[\big]{T^\srptk(x)} \le \frac{\int_0^x \lambda t^2f_S(t) \,dt}{2(1-\rho_{\le x})^2}
  + \frac{k\rho_{\le x}x}{1-\rho_{\le x}}
  + \int_0^x \frac{k}{1-\rho_{\le t}} \,dt,\]
  where $f_S(\cdot)$ is the probability density function of
  the service requirement distribution~$S$.
\end{theorem}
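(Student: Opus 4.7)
The plan is to refine Theorem~\ref{thm:response_time} by implementing the three improvement ideas listed in the excerpt, and to separately bound the waiting time and residence time of the tagged job~$j$.

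First, I would rework Lemma~\ref{lem:delta} with System~$1$ scheduled under PSJF-1 rather than SRPT-1. Because PSJF-1 declares a job ``relevant'' based on its original size rather than its current remaining size, one automatically has $\mathtt{NewlyRelevant}^\A = 0$; the other two ingredients of the proof ($\mathtt{Arrivals}^\A = \mathtt{Arrivals}^\B$ and $\mathtt{Served}^\A \le \mathtt{Served}^\B$ in many-jobs intervals) carry over verbatim, since PSJF-1 still processes relevant work at rate~$1$ whenever any exists. The bound on $\Delta_{\le x}$ at the start of a many-jobs interval is also unchanged, so the PSJF-based analogue of Lemma~\ref{lem:delta} again gives $\Delta_{\le x}(t) \le kx$. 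The gain is that the expected steady-state relevant work in PSJF-1 has the closed form $\E[\big]{\rwork^\psjf_{\le x}} = \lambda \int_0^x t^2 f_S(t)\,dt / (2(1-\rho_{\le x}))$.

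Second, I would sharpen $\Delta_{\le x}(t) \le kx$ to $\Delta_{\le x}(t) \le m(t)\,x$, where $m(t)$ denotes the number of servers in System~$k$ currently serving relevant jobs. In steady state, a work-conservation / rate-balance argument identifies the relevant-work completion rate with the relevant arrival rate, giving $\E{m} = k\rho_{\le x}$, since each relevant-busy server contributes rate $1/k$. Combined with PASTA, this yields $\E[\big]{\rwork^\srptk_{\le x}} \le \E[\big]{\rwork^\psjf_{\le x}} + k\rho_{\le x}x$. Applying the busy-period relation to $j$'s waiting time then gives
\begin{equation*}
  \E[\big]{W^\srptk(x)} \le \frac{\E[\big]{\rwork^\srptk_{\le x}}}{1-\rho_{\le x}} \le \frac{\lambda \int_0^x t^2 f_S(t)\,dt}{2(1-\rho_{\le x})^2} + \frac{k\rho_{\le x}x}{1-\rho_{\le x}},
\end{equation*}
supplying the first two terms of the target bound.

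Finally, I would bound the residence time using a decreasing-cutoff argument analogous to the classical SRPT-1 residence analysis. When $j$'s remaining size is $t$, only arrivals of size at most~$t$ can preempt it; an infinitesimal chunk $dt$ of $j$'s remaining work requires $k\,dt$ of actual service time (each server runs at rate $1/k$), and a busy-period argument inflates this by the factor $1/(1-\rho_{\le t})$ to absorb preempting relevant arrivals. Integrating yields $\E[\big]{R^\srptk(x)} \le \int_0^x k/(1-\rho_{\le t})\,dt$, the third term. The main obstacle will be the second step: the pointwise inequality $\Delta_{\le x}(t) \le m(t)\,x$ is immediate in few-jobs intervals (all relevant jobs are in service, so $\rwork^\srptk_{\le x}(t) \le m(t)\,x$), but requires care in many-jobs intervals, where one must exploit the non-increase of $\Delta_{\le x}$ together with the boundary structure at the start of each such interval to charge queued relevant work in System~$k$ against its counterpart in System~$1$. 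The third step is also subtle because $j$ can itself be preempted by later smaller arrivals, but the decreasing-cutoff busy-period argument delivers the stated upper bound without having to track preemptions explicitly.
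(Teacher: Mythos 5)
Your proposal follows the paper's proof at every key step: replacing SRPT\1/ with PSJF\1/ in the relevant-work comparison, tightening the worst-case gap $\Delta_{\le x}(t) \le kx$ to $\Delta_{\le x}(t) \le m(t)\,x$ where $m(t)$ is the number of servers busy with relevant work, invoking the rate-balance identity $\E{m} = k\rho_{\le x}$, and chaining relevant busy periods with shrinking cutoffs. The final arithmetic matches the target bound exactly. The one caveat lies in how you label the two pieces. The quantity $\E{B_{\le x}(\rwork^\srptk_{\le x})}$ strictly upper-bounds the waiting time: the tagged job typically enters service \emph{before} the old-work busy period clears, namely as soon as fewer than $k$ smaller jobs remain, at which point up to $(k-1)x$ of relevant work can still be pending. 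Consequently the term $\int_0^x k/(1-\rho_{\le t})\,dt$ does \emph{not} on its own bound the actual residence time $T^\srptk(x) - W^\srptk(x)$; it makes no provision for that leftover work. The paper sidesteps this by never separating waiting from residence: it proves the single monolithic statement that $j$ receives at least $s$ units of service within a relevant busy period started by $r_j + ks$ work, and then refines the budget across chained busy periods with decreasing cutoffs. Your two summands should therefore be read as ``time for the old-work busy period to clear'' and ``additional time for the tagged/virtual/new-work busy periods to clear,'' rather than as waiting time and residence time. With that reading, your argument is exactly the paper's and the theorem follows.
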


\begin{proof}
  See Appendix~\ref{app:improved-srpt}.
  \noqed
\end{proof}

Note that the first term of Theorem~\ref{thm:improved-response}'s upper bound
is the mean waiting time of a job of size~$x$ under PSJF\1/.

\section{Optimality of SRPT\k/ in Heavy Traffic}
\label{sec:optimal}

With the bound derived in Theorem~\ref{thm:response_time}, we can prove our main result on the optimality of SRPT\k/ in the heavy-traffic limit. Theorem~\ref{thm:ratio} will refer to $\E[\big]{T^\srptk}$, which is derived from Theorem~\ref{thm:response_time} by taking the expectation over possible sizes~$x$.

\begin{theorem}
  \label{thm:ratio}
  In an \mgk/ with any service requirement distribution~$S$ which is either
  \begin{enumerate*}[(i)]
  \item
    bounded or
  \item
    unbounded with tail function of upper Matuszewska index\footnote{%
      See Section~\ref{sec:matuszewska} or Appendix~\ref{app:matuszewska}.} less than~$-2$,
  \end{enumerate*}
  \[\lim_{\rho \rightarrow 1} \frac{\E[\big]{T^\srptk}}{\E[\big]{T^\srpt}} = 1.\]
\end{theorem}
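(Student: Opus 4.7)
The plan is to combine the per-size upper bound from Theorem~\ref{thm:improved-response} with the trivial lower bound $\E{T^\srpt} \le \E{T^\srptk}$ noted in the introduction (which follows from the optimality of SRPT\1/ among all preemptive policies). Integrating the right-hand side of Theorem~\ref{thm:improved-response} against $f_S(x)$ and swapping the order of integration in the last term yields
\[\E{T^\srptk} \le A(\rho) + B(\rho) + C(\rho),\]
where $A(\rho) = \int_0^\infty \frac{\int_0^x \lambda t^2 f_S(t)\,dt}{2(1-\rho_{\le x})^2}\,f_S(x)\,dx$ is the mean waiting time under PSJF\1/, $B(\rho) = k\int_0^\infty \frac{\rho_{\le x}\,x}{1-\rho_{\le x}}\,f_S(x)\,dx$, and $C(\rho) = k\int_0^\infty \frac{\bar F_S(t)}{1-\rho_{\le t}}\,dt$. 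Hence it suffices to show $A(\rho)/\E{T^\srpt} \to 1$ and $(B(\rho)+C(\rho))/\E{T^\srpt} \to 0$ as $\rho \to 1$.

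For the first limit I would appeal to the single-server heavy-traffic asymptotics of \citet{Lin}: under either hypothesis (i) or (ii), the quantities $\E{W^\psjf}$, $\E{W^\srpt}$, and $\E{T^\srpt}$ all diverge at the same rate. Indeed, Schrage's formula gives $\E{W^\srpt} - \E{W^\psjf} = \int_0^\infty \frac{\lambda x^2 \bar F_S(x)}{2(1-\rho_{\le x})^2}\,f_S(x)\,dx$ and $\E{R^\srpt} = \int_0^\infty \frac{\bar F_S(t)}{1-\rho_{\le t}}\,dt$, both of which are of strictly smaller order than $\E{W^\psjf}$ under each tail hypothesis, so that $A(\rho) = \E{W^\psjf} \sim \E{T^\srpt}$.

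The core of the argument is to show $B(\rho) + C(\rho) = o(\E{T^\srpt})$. Observe first that $C(\rho) = k\,\E{R^\srpt}$, so the previous paragraph already gives $C(\rho)/\E{T^\srpt} \to 0$ (the factor $k$ is a fixed constant). For $B(\rho)$, I would split the integration over $S$ into the region where $\rho - \rho_{\le S}$ is bounded away from $0$ (which contributes $O(1)$) and the region near the upper endpoint of the support. In the bounded case, the behavior of $f_S$ near its supremum controls the near-endpoint contribution, yielding at worst logarithmic growth in $1/(1-\rho)$. In the unbounded case, an upper Matuszewska index less than $-2$ gives, via Potter-type bounds, $\bar F_S(x) = O(x^{-2-\epsilon})$ and hence $\rho - \rho_{\le x} = O(x^{-1-\epsilon})$ for some $\epsilon>0$; this is strong enough to bound $B(\rho)$ by a quantity of strictly smaller order than $\E{T^\srpt}$.

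I expect the main obstacle to be making the rate comparison precise in the unbounded case, since the Matuszewska condition gives tail control only up to slowly varying factors. Handling this cleanly requires carefully combining the Potter-type estimates above with the explicit heavy-traffic growth rates for $\E{T^\srpt}$ given by \citet{Lin}, so as to confirm that the dominant scale of $\E{T^\srpt}$ strictly exceeds the scale of $B(\rho)$. Once this comparison is in hand, the upper bound $\E{T^\srptk} \le A(\rho) + B(\rho) + C(\rho) = \E{T^\srpt}(1+o(1))$ together with $\E{T^\srptk} \ge \E{T^\srpt}$ yields $\lim_{\rho\to 1}\E{T^\srptk}/\E{T^\srpt} = 1$.
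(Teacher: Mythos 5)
Your decomposition is sound and the overall strategy --- upper-bound $\E{T^{\srptk}}$ by a main term that tracks $\E{T^{\srpt}}$ plus a remainder of strictly smaller order, then combine with the trivial lower bound $\E{T^{\srptk}} \ge \E{T^{\srpt}}$ --- is exactly the right one. However, you have taken a more complicated path than necessary in two places, and the second one leaves a real gap.

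First, the paper proves this theorem from the simpler bound of Theorem~\ref{thm:response_time}, $T^{\srptk}(x) \le_\st W^{\srpt}(x) + B_{\le x}(2kx)$, not from Theorem~\ref{thm:improved-response}. Starting from the improved bound is harmless but forces you to track three terms instead of two, and also leads you to try to show $A(\rho) \sim \E{T^{\srpt}}$. You only need the one-sided inequality $A(\rho) \le \E{T^{\srpt}}$, which is immediate from $\E{W^{\psjf}} \le \E{W^{\srpt}} \le \E{T^{\srpt}}$ (the first step by the SMART-class ordering you already cite); proving the asymptotic equivalence, while true, is extra work.

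Second, and more importantly, your plan for $B(\rho)$ via Potter-type bounds is where the proposal is genuinely incomplete, and it misses the computation that makes everything collapse. Observe that $B(\rho) \le k \int_0^\infty \frac{x f_S(x)}{1-\rho_{\le x}}\,dx$, and this integral can be evaluated \emph{exactly} by the change of variables $u = \rho_{\le x}$, using $du = \lambda x f_S(x)\,dx$, $\rho_{\le 0} = 0$, and $\lim_{x\to\infty}\rho_{\le x} = \rho$:
\begin{equation*}
  \int_0^\infty \frac{x f_S(x)}{1-\rho_{\le x}}\,dx
  = \int_0^\rho \frac{du}{\lambda(1-u)}
  = \frac{1}{\lambda}\ln\!\left(\frac{1}{1-\rho}\right).
\end{equation*}
This holds for \emph{every} $S$, with no regular-variation machinery, no Potter estimates, and no case split between bounded and unbounded $S$. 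It shows $B(\rho) = O(\log(1/(1-\rho)))$ directly. (In the paper this same calculation handles the single remainder term $\E{B_{\le S}(2kS)} = \Theta(\log(1/(1-\rho)))$.) The tail hypothesis on $S$ --- bounded, or upper Matuszewska index below $-2$ --- is then used only once, in Lemma~\ref{lem:srpt-1-bound}, to conclude from \citet{Lin} that $\log(1/(1-\rho)) / \E{T^{\srpt}} \to 0$. Your proposal leans on Matuszewska/Potter bounds to control $B(\rho)$ itself, which is both unnecessary and, as you flag, the hardest part of your plan; the exact change of variables dissolves the obstacle entirely. Your handling of $C(\rho)$ (recognizing it as $k\,\E{R^{\srpt}}$ after Fubini and then showing $\E{R^{\srpt}} = o(\E{T^{\srpt}})$) is correct, but note it can equally be absorbed by the same change of variables, since $C(\rho) = k\int_0^\infty \int_0^x \frac{f_S(x)}{1-\rho_{\le t}}\,dt\,dx \le k\int_0^\infty \frac{x f_S(x)}{1-\rho_{\le x}}\,dx$ after bounding $1/(1-\rho_{\le t}) \le 1/(1-\rho_{\le x})$ is \emph{not} valid in that direction --- you do need the Fubini argument there, so your treatment of $C$ is fine as is.
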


To prove Theorem~\ref{thm:ratio}, we start with a result from the literature on the performance of SRPT\1/ in the heavy-traffic limit \cite{Lin}.

\begin{lemma}
  \label{lem:srpt-1-bound}
  In an \mg1/ with any service requirement distribution~$S$ which is either
  \begin{enumerate*}[(i)]
  \item
    bounded or
  \item
    unbounded with tail function of upper Matuszewska index less than~$-2$,
  \end{enumerate*}
  \[\lim_{\rho \rightarrow 1} \frac{\log\Bigl(\frac{1}{1-\rho}\Bigr)}{\E[\big]{T^\srpt}} = 0.\]
\end{lemma}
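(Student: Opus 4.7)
The plan is to invoke the heavy-traffic asymptotic analysis of $\E{T^\srpt}$ in an \mg1/ under SRPT\1/ due to \citet{Lin}, which explicitly characterizes the rate at which $\E{T^\srpt} \to \infty$ as $\rho \to 1$ for each of the two distribution classes considered. The strategy is to show directly that under either hypothesis, $\E{T^\srpt}$ grows strictly faster than $\log(1/(1-\rho))$, which is exactly the statement of the lemma.

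The concrete steps I would carry out are as follows. First, I would start from Schrage's exact mean response time formula for SRPT\1/, which gives $\E{T^\srpt}$ as an integral over sizes $x$ involving the factor $1/(1-\rho_{\le x})^2$ and the cumulative term $\int_0^x dt/(1-\rho_{\le t})$. Second, I would lower-bound $\E{T^\srpt}$ by isolating the contribution from the large-job region:
\begin{itemize}
\item In case (i), with bounded support $[0, x_{\max}]$, note that $\rho_{\le x} = \rho$ for all $x \ge x_{\max}$, and more generally $1 - \rho_{\le x}$ behaves like $(1-\rho) + O(x_{\max} - x)$ near the right endpoint. Integrating the $1/(1-\rho_{\le x})^2$ contribution against the density near $x_{\max}$ yields $\E{T^\srpt} = \Omega(1/(1-\rho))$, which dominates $\log(1/(1-\rho))$.
\item In case (ii), the upper Matuszewska index being less than $-2$ ensures that the second moment of $S$ is finite but, more importantly, gives quantitative control on how fast $\rho - \rho_{\le x}$ decays as $x \to \infty$. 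Combining this with Schrage's formula (as Lin does in detail) yields a lower bound on $\E{T^\srpt}$ of the form $C/(1-\rho)^\alpha$ for some $\alpha > 0$, which again dominates $\log(1/(1-\rho))$.
\end{itemize}
Third, from either lower bound I would divide through and take $\rho \to 1$ to conclude that $\log(1/(1-\rho)) / \E{T^\srpt} \to 0$.

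The main obstacle is case (ii): translating the tail condition (upper Matuszewska index $< -2$) into a concrete quantitative lower bound on $\E{T^\srpt}$ requires invoking the Potter-type bounds associated with O-regular variation, and care must be taken because the Matuszewska index is a limsup/liminf condition rather than a pointwise asymptotic. The cleanest route is simply to appeal to Lin's explicit heavy-traffic formulas, which already package this machinery, rather than reprove them from scratch. Case (i) by contrast is essentially a direct calculation from Schrage's formula and does not require any regular-variation theory.
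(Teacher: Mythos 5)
Your proposal is correct and takes essentially the same approach as the paper: both cite \citet{Lin}'s heavy-traffic characterization of $\E[\big]{T^\srpt}$ to handle the two cases. One small note: in case (ii) the paper cites the fact (proved inside Lin's Theorem~2) that $\log(1/(1-\rho))\cdot(1-\rho)G^{-1}(\rho)\to 0$, whereas you instead assert a polynomial lower bound $\E[\big]{T^\srpt}=\Omega(1/(1-\rho)^\alpha)$; your claim is also correct (it follows from Lin's $\Theta(1/((1-\rho)G^{-1}(\rho)))$ together with the Potter-type upper bound $\bar F_S(x)\le Cx^{-2-\epsilon}$ implied by the Matuszewska condition, which bounds $G^{-1}(\rho)$ from above) and is slightly stronger than what the paper uses, but the paper's route of quoting Lin's $\log$-factor statement directly is simpler.
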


\begin{proof}
  Follows immediately from results of \citet{Lin}.
  See Appendix~\ref{app:srpt-1-bound}.
  \noqed
\end{proof}

The next step in proving Theorem~\ref{thm:ratio},
is to use the bound on $T^\srptk(x)$ provided by Theorem~\ref{thm:response_time}.
Let $H(x)$ be the bound on $\E[\big]{T^\srptk(x)}$,
\begin{equation}
  \label{eq:H-bound}
  H(x) = \E[\big]{W^\srpt(x) + B_{\le x}(2kx)}.
\end{equation}
By taking the expectation of drawing size~$x$
from the service requirement distribution~$S$,
Theorem~\ref{thm:response_time} implies $\E[\big]{T^\srptk} \le \E{H(S)}$.
The following lemma shows that $\E{H(S)}$ approaches $\E*{T^\srpt}$
in the heavy-traffic limit.

\begin{lemma}
  \label{lem:ratio_for_bound}
  In an \mgk/ with any service requirement distribution~$S$ which is either
  \begin{enumerate*}[(i)]
  \item
    bounded or
  \item
    unbounded with tail function of upper Matuszewska index less than~$-2$,
  \end{enumerate*}
  \[\lim_{\rho \rightarrow 1} \frac{\E{H(S)}}{\E[\big]{T^\srpt}} = 1.\]
\end{lemma}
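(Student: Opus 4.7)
The plan is to split $\E{H(S)}$ into its two natural summands and show that the first piece tracks $\E[\big]{T^\srpt}$ while the second is asymptotically negligible. Specifically, I would write $\E{W^\srpt(S)} = \E[\big]{T^\srpt} - \E{R^\srpt(S)}$ and use Definition~\ref{def:busy_period} to get $\E{B_{\le S}(2kS)} = \E{2kS/(1-\rho_{\le S})}$. This yields the decomposition
\[\frac{\E{H(S)}}{\E[\big]{T^\srpt}} = 1 - \frac{\E{R^\srpt(S)}}{\E[\big]{T^\srpt}} + \frac{\E{B_{\le S}(2kS)}}{\E[\big]{T^\srpt}},\]
so it suffices to show that both correction terms vanish as $\rho \to 1$.

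Next I would reduce both correction terms to a single estimate. Because $\rho_{\le t}$ is nondecreasing in $t$, the SRPT\1/ residence time satisfies $\E{R^\srpt(x)} = \int_0^x \frac{1}{1-\rho_{\le t}}\,dt \le \frac{x}{1-\rho_{\le x}}$, so $\E{R^\srpt(S)} \le \frac{1}{2k}\E{B_{\le S}(2kS)}$. It is therefore enough to control $\E{B_{\le S}(2kS)}$. Writing the expectation as an integral and applying the change of variables $u = \rho_{\le x}$, which is justified by $\frac{d}{dx}\rho_{\le x} = \lambda x f_S(x)$, gives
\[\E{B_{\le S}(2kS)} = 2k\int_0^\infty \frac{x f_S(x)}{1-\rho_{\le x}}\,dx = \frac{2k}{\lambda}\int_0^\rho \frac{du}{1-u} = \frac{2k}{\lambda}\log\frac{1}{1-\rho}.\]

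Finally, I would invoke Lemma~\ref{lem:srpt-1-bound}, which asserts $\log(1/(1-\rho)) = o\bigl(\E[\big]{T^\srpt}\bigr)$ as $\rho \to 1$ under either hypothesis on $S$. This immediately gives $\E{B_{\le S}(2kS)}/\E[\big]{T^\srpt} \to 0$, and via the reduction above, $\E{R^\srpt(S)}/\E[\big]{T^\srpt} \to 0$ as well. Substituting back into the decomposition yields $\E{H(S)}/\E[\big]{T^\srpt} \to 1$, as claimed. The main subtlety I expect lies in the unbounded case: one must verify that the upper Matuszewska index condition on the tail of $S$ ensures that $1 - \rho_{\le x}$ approaches $1 - \rho$ fast enough for the improper integral to converge and for the change of variables to remain valid as $x \to \infty$ (in the bounded case the upper limit of integration is simply the essential supremum of $S$, so the computation is immediate).
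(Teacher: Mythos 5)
Your proof is correct, and it takes a genuinely different route from the paper's. The paper establishes $\E{H(S)}/\E[\big]{T^\srpt} \ge 1$ by invoking $\E{H(S)} \ge \E[\big]{T^\srptk} \ge \E[\big]{T^\srpt}$ (Theorem~\ref{thm:response_time} together with the optimality of SRPT\1/), and then proves the upper bound $\le 1$ using only the crude inequality $W^\srpt \le T^\srpt$ plus the estimate $\E{B_{\le S}(2kS)} = \Theta(\log(1/(1-\rho)))$ and Lemma~\ref{lem:srpt-1-bound}. You instead write the ratio \emph{exactly} as $1 - \E{R^\srpt(S)}/\E[\big]{T^\srpt} + \E{B_{\le S}(2kS)}/\E[\big]{T^\srpt}$, show the second summand vanishes via the same change-of-variables computation, and dispose of the first by the bound $\E{R^\srpt(x)} = \int_0^x \tfrac{dt}{1-\rho_{\le t}} \le \tfrac{x}{1-\rho_{\le x}} = \tfrac{1}{2k}\E{B_{\le x}(2kx)}$, so it is dominated by the second. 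What your version buys is self-containment: you never need the optimality of SRPT\1/ or the multiserver bound $\E{H(S)} \ge \E[\big]{T^\srptk}$, only the classical formula for SRPT\1/ residence time. What the paper's version buys is brevity: the $\ge 1$ direction is free, so only a single correction term has to be controlled.

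One small remark: the ``main subtlety'' you flag at the end is not actually a concern. For any $\rho < 1$ we have $\rho_{\le x} \le \rho$ uniformly in $x$, so $\frac{1}{1-\rho_{\le x}} \le \frac{1}{1-\rho}$ and the integral $\int_0^\infty \frac{x f_S(x)}{1-\rho_{\le x}}\,dx \le \frac{\E{S}}{1-\rho}$ converges with no tail condition whatsoever; the change of variables is a plain monotone substitution valid in both the bounded and unbounded cases. The Matuszewska-index hypothesis enters only through Lemma~\ref{lem:srpt-1-bound}, i.e., to guarantee that $\E[\big]{T^\srpt}$ grows strictly faster than $\log(1/(1-\rho))$, not to justify the integral computation.
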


\begin{proof}
  We know $\E{H(S)} \ge \E[\big]{T^\srptk}$ by Theorem~\ref{thm:response_time},
  and we know $\E[\big]{T^\srptk} \ge \E[\big]{T^\srpt}$ by optimality of SRPT\1/,
  so
  \[\frac{\E{H(S)}}{\E[\big]{T^\srpt}} \ge 1.\]
  We thus only need to show
  \[\lim_{\rho \rightarrow 1} \frac{\E[\big]{H(S)}}{\E[\big]{T^\srpt}} \le 1.\]
  Because $W^\srpt \le T^\srpt$, by \eqref{eq:H-bound} it suffices to show
  \begin{equation}
    \label{eq:busy_goal}
    \lim_{\rho \rightarrow 1} \frac{\E{B_{\le S}(2kS)}}{\E[\big]{T^\srpt}} = 0.
  \end{equation}
  Applying standard results for busy periods \cite{harchol2013performance},
  \begin{equation*}
    \E*{B_{\le S}(2kS)}
    = 2k\E*{\frac{S}{1-\rho_{\le S}}}
    = 2k\int_0^\infty \frac{x f_S(x)}{1-\rho_{\le x}} \,dx,
  \end{equation*}
  where $f_S(\cdot)$ is the probability density function of~$S$.
  To compute the integral,
  we make a change of variables from $x$ to $\rho_{\le x}$
  (see Definition~\ref{def:busy_period}),
  which uses the following facts:
  \begin{align*}
    \rho_{\le x} &= \lambda \E{S\indicator(S < x)} = \int_0^x \lambda t f_S(t) \,dt \\
    \frac{d\rho_{\le x}}{dx} &= \lambda x f_S(x) \\
    \rho_{\le 0} &= 0 \\
    \lim_{x \rightarrow \infty} \rho_{\le x} &= \rho.
  \end{align*}
  Given this change of variables, we compute
  \begin{align*}
    \E*{\frac{S}{1-\rho_{\le S}}} &= \int_0^\infty \frac{x f_S(x)}{1-\rho_{\le x}} \,dx \\
                                  &= \int_0^\rho \frac{1}{\lambda(1 - \rho_{\le x})}\,d\rho_{\le x} \\
                                  &= \frac{1}{\lambda}\ln \left(\frac{1}{1-\rho}\right) \\
                                  &= \Theta\left (\log \left (\frac{1}{1-\rho} \right) \right).
  \end{align*}
  This means $\E*{B_{\le S}(2kS)} = \Theta(\log(1/(1-\rho)))$,
  so \eqref{eq:busy_goal} follows from Lemma~\ref{lem:srpt-1-bound}.
\end{proof}

Armed with Theorem~\ref{thm:response_time} and Lemma~\ref{lem:ratio_for_bound}, we are now prepared to prove our main result, Theorem~\ref{thm:ratio}.
\begin{proof}[Proof of Theorem~\ref{thm:ratio}]
  Because SRPT\1/ minimizes mean response time, it suffices to show that
  \begin{equation*}
    \lim_{\rho \rightarrow 1} \frac{\E[\big]{T^\srptk}}{\E*{T^\srpt}} \le 1,
  \end{equation*}
  which follows immediately from Theorem~\ref{thm:response_time}
  and Lemma~\ref{lem:ratio_for_bound}.
\end{proof}

Theorem~\ref{thm:ratio} and the optimality of SRPT\1/ imply that
SRPT\k/ is optimal in the heavy-traffic limit.

\begin{corollary}
  \label{cor:optimal}
  In an \mgk/ with any service requirement distribution~$S$ which is either
  \begin{enumerate*}[(i)]
  \item
    bounded or
  \item
    unbounded with tail function of upper Matuszewska index less than~$-2$,
  \end{enumerate*}
  \begin{equation*}
    \lim_{\rho \rightarrow 1} \frac{\E[\big]{T^\srptk}}{\E*{T^P}} \le 1
  \end{equation*}
  for any scheduling policy~$P$.
\end{corollary}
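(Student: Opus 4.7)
The plan is to derive Corollary~\ref{cor:optimal} as an essentially immediate consequence of Theorem~\ref{thm:ratio} together with the classical fact that SRPT\1/, run on a single server of speed~$1$, is optimal even when compared to $k$-server policies whose total service rate is also~$1$. The key chain of inequalities is
\[\frac{\E[\big]{T^\srptk}}{\E[\big]{T^P}} \;\le\; \frac{\E[\big]{T^\srptk}}{\E[\big]{T^\srpt}},\]
whose right-hand side tends to~$1$ by Theorem~\ref{thm:ratio} under the stated tail conditions on~$S$.

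To justify the inequality $\E[\big]{T^\srpt} \le \E[\big]{T^P}$ for an arbitrary $k$-server policy~$P$, I would argue by sample-path simulation: given any realization of arrivals and any schedule produced by~$P$ on the $k$-server system (whose total instantaneous service rate is at most~$1$), the single-speed-$1$ server can reproduce exactly the same per-job service rates at every instant. Hence every schedule feasible on $k$~servers is also feasible on a single server of speed~$1$, so the single-server optimum cannot exceed the $k$-server optimum under~$P$. Invoking the classical sample-path (and hence in-expectation) optimality of SRPT\1/ among single-server policies \cite{schrage1968letter} then gives $\E[\big]{T^\srpt} \le \E[\big]{T^P}$. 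This is precisely the left inequality already asserted in the introduction of the paper.

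With this comparison in hand, the corollary follows by dividing: for any policy~$P$,
\[\limsup_{\rho \to 1} \frac{\E[\big]{T^\srptk}}{\E[\big]{T^P}} \;\le\; \lim_{\rho \to 1} \frac{\E[\big]{T^\srptk}}{\E[\big]{T^\srpt}} \;=\; 1,\]
which is the claimed bound. I do not anticipate any real obstacle here; the only non-trivial ingredient beyond Theorem~\ref{thm:ratio} is the cross-server optimality of SRPT\1/, and the simulation argument above reduces it to the well-known single-server result. The remaining work is purely bookkeeping, and no further analysis of virtual work, relevant work, or the $\rho \to 1$ asymptotics is required beyond what has already been established.
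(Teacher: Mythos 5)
Your proposal is correct and takes essentially the same route as the paper: the corollary is an immediate consequence of Theorem~\ref{thm:ratio} together with the inequality $\E[\big]{T^\srpt} \le \E*{T^P}$, which the paper asserts (in the chain $\E[\big]{T^\srpt} \le \E[\big]{T^\optk} \le \E[\big]{T^\srptk}$ in the introduction) without spelling out the simulation argument you give. Your elaboration of why a speed-$1$ server can emulate any $k$-server (speed-$1/k$ each) schedule is the right justification for that step, and the rest is the same chaining and limit bookkeeping the paper intends.
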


Recall from \eqref{eq:H-bound}
that Theorem~\ref{thm:response_time} implies $\E[\big]{T^\srptk} \le \E{H(S)}$.
Similarly, letting
\begin{equation*}
  I(x) =
  \frac{\int_0^x \lambda t^2f_S(t) \,dt}{2(1-\rho_{\le x})^2}
  + \frac{k\rho_{\le x}x}{1-\rho_{\le x}}
  + \int_0^x \frac{k}{1-\rho_{\le t}} \,dt,
\end{equation*}
Theorem~\ref{thm:improved-response} implies $\E[\big]{T^\srptk} \le \E{I(S)}$.
Lemma~\ref{lem:ratio_for_bound} and the optimality of SRPT\1/ imply that
these bounds on SRPT\k/'s mean response time are tight as $\rho \to 1$.

\begin{corollary}
  \label{cor:tight}
  In an \mgk/ with any service requirement distribution~$S$ which is either
  \begin{enumerate*}[(i)]
  \item
    bounded or
  \item
    unbounded with tail function of upper Matuszewska index less than~$-2$,
  \end{enumerate*}
  \begin{equation*}
    \lim_{\rho \rightarrow 1} \frac{\E{H(S)}}{\E*{T^\srptk}}
    = \lim_{\rho \rightarrow 1} \frac{\E{I(S)}}{\E*{T^\srptk}}
    = 1.
  \end{equation*}
\end{corollary}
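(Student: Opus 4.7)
The plan is a sandwich argument resting on the inequality chain
\[\E{T^\srpt} \le \E{T^\srptk} \le \E{I(S)} \le \E{H(S)},\]
together with Lemma~\ref{lem:ratio_for_bound}. The leftmost inequality is optimality of SRPT\1/, the middle inequality is Theorem~\ref{thm:improved-response}, and the rightmost inequality expresses the fact (alluded to in the lead-in to Theorem~\ref{thm:improved-response}) that $I$ is a refinement of $H$; I would verify it pointwise below.

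First I would handle the $H$ limit. Theorem~\ref{thm:response_time} gives $\E{H(S)}/\E{T^\srptk} \ge 1$. For the reverse direction, SRPT\1/'s optimality yields
\[\frac{\E{H(S)}}{\E{T^\srptk}} \le \frac{\E{H(S)}}{\E{T^\srpt}},\]
and the right-hand side tends to $1$ by Lemma~\ref{lem:ratio_for_bound}, so a squeeze gives the first limit. The $I$ limit is then immediate: Theorem~\ref{thm:improved-response} gives $\E{I(S)}/\E{T^\srptk} \ge 1$, while the pointwise comparison $I \le H$ together with the $H$ limit yields
\[\frac{\E{I(S)}}{\E{T^\srptk}} \le \frac{\E{H(S)}}{\E{T^\srptk}} \to 1.\]

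The only real computation is verifying $I(x) \le H(x)$ for all $x \ge 0$, which splits into two pieces. The first term of $I(x)$ equals $\E{W^\psjf(x)}$ (as noted immediately after Theorem~\ref{thm:improved-response}), and $\E{W^\psjf(x)} \le \E{W^\srpt(x)}$ by \cite{wierman2005nearly}, which matches the $W^\srpt(x)$ piece of $H(x)$. For the remaining two terms of $I(x)$, monotonicity of $\rho_{\le t}$ in $t$ gives $\int_0^x k/(1-\rho_{\le t}) \,dt \le kx/(1-\rho_{\le x})$, so
\[\frac{k\rho_{\le x}x}{1-\rho_{\le x}} + \int_0^x \frac{k}{1-\rho_{\le t}} \,dt \le \frac{kx(1+\rho_{\le x})}{1-\rho_{\le x}} \le \frac{2kx}{1-\rho_{\le x}} = \E{B_{\le x}(2kx)},\]
matching the $B_{\le x}(2kx)$ piece of $H(x)$. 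There is no serious obstacle: the heavy lifting is already in Theorem~\ref{thm:ratio} and Lemma~\ref{lem:ratio_for_bound}, and everything remaining is routine bookkeeping.
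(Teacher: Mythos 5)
Your proposal is correct and follows essentially the same route as the paper: establish $\lim_{\rho\to 1}\E{H(S)}/\E{T^\srptk}=1$ via Theorem~\ref{thm:response_time}, Lemma~\ref{lem:ratio_for_bound}, and SRPT\1/ optimality, then reduce the $I$ limit to the pointwise inequality $I(x)\le H(x)$, verified term by term. The only cosmetic difference is that for the first term you route through $\E{W^\psjf(x)}\le\E{W^\srpt(x)}$, whereas the paper observes directly that the first term of $I$ is one of the nonnegative summands in the standard formula for $\E{W^\srpt(x)}$; both are valid and immediate.
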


\begin{proof}
  After applying Theorem~\ref{thm:response_time},
  Lemma~\ref{lem:ratio_for_bound},
  and the optimality of SRPT\1/,
  we know that
  \begin{equation*}
    \lim_{\rho \rightarrow 1} \frac{\E{H(S)}}{\E*{T^\srptk}}
    = 1.
  \end{equation*}
  All that remains is to show $I(x) \le H(x)$.
  This holds because
  \begin{equation*}
    \frac{\int_0^x \lambda t^2f_S(t) \,dt}{2(1-\rho_{\le x})^2}
    \le
    \E[\big]{W^\srpt(x)}
  \end{equation*}
  by the standard analysis of $W^\srpt(x)$ \cite{schrage1966queue},
  and
  \begin{equation*}
    \frac{k\rho_{\le x}x}{1-\rho_{\le x}} + \int_0^x \frac{k}{1-\rho_{\le t}} \,dt
    \le
    \frac{2kx}{1-\rho_{\le x}}
    =
    \E{B_{\le x}(2kx)}.
    \qedhere
  \end{equation*}
\end{proof}

\begin{figure*}
  \begin{subfigure}[t]{0.475\textwidth}
  \centering
  Ratio $\E[\big]{T^\srptk}/\E[\big]{T^\srpt}$

  \includegraphics[width=\linewidth,trim={0 0.3cm 0 0.5cm},clip]{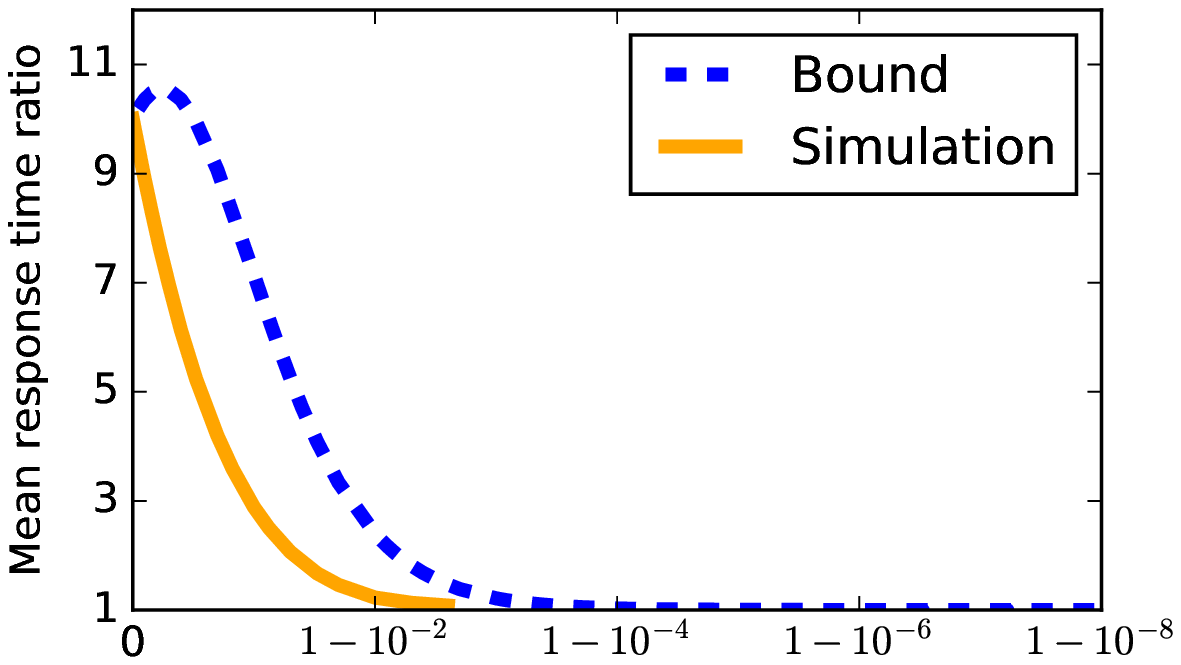}

  System load ($\rho$)
  \end{subfigure}
  \begin{subfigure}[t]{0.475\textwidth}
    \centering
  Ratio $\E[\big]{T^\srptk}/\E[\big]{T^\srpt}$

  \includegraphics[width=\linewidth,trim={0 0.3cm 0 0.5cm},clip]{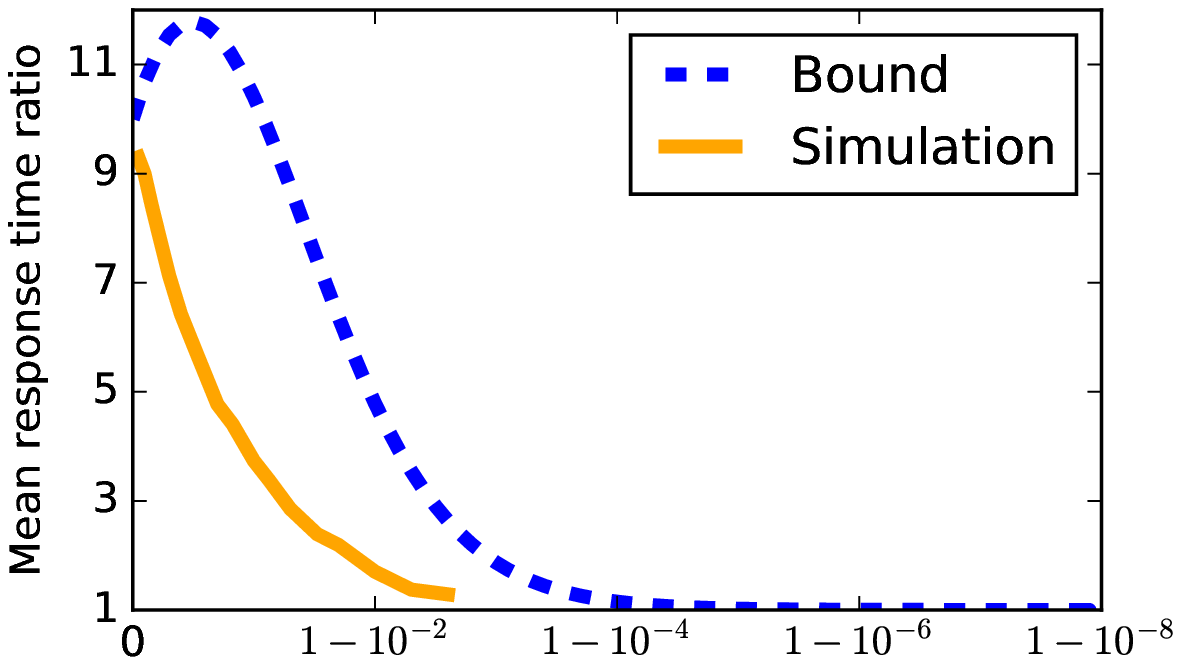}

  System load ($\rho$)
  \end{subfigure}
  \vspace{0.5\baselineskip}

  \captiondetail
  The plots above show the ratio $\E[\big]{T^\srptk}/\E[\big]{T^\srpt}$.
  Observe that as $\rho \to 1$, both our bound and the simulation
  converge to a ratio of 1.
  Our simulations of this ratio are the solid orange curves.
  Our analytic upper bounds derived in Theorem~\ref{thm:improved-response} are the dashed blue curves.
  We use $k = 10$ servers. The service requirement distribution~$S$ is $\mathrm{Uniform}(0, 2)$
  in the left plot and a Hyperexponential distribution with $E[S]=1$ and $C^2 = 10$ in the right plot.
  We only simulate up to $\rho=0.9975$ due to long convergence times.

  \caption{Convergence of mean response time ratio}
  \label{fig:ratio}
\end{figure*}

As an illustration of the optimality of SRPT\k/, we plot the ratio $\E[\big]{T^\srptk}/\E*{T^\srpt}$ in Figure~\ref{fig:ratio}. The solid orange lines show simulation results for this ratio. For the dashed blue lines, we used our analysis from Theorem~\ref{thm:improved-response} as an upper bound on $\E[\big]{T^\srptk}$, and divided by the known results for $\E[\big]{T^\srpt}$. The important feature to notice in Figure~\ref{fig:ratio} is that as system load $\rho$ approaches~$1$, both our analytic bound and the simulation converge to~$1$.

\section{Other Scheduling Policies}
\label{sec:other-policies}
We generalize our analysis to give the first response time bounds on several additional multiserver scheduling policies. Using the bounds, we prove optimality results for each policy as $\rho \rightarrow 1$. For each policy~$P$, we generalize the usual single-server policy, written $P$\1/, to a multiserver policy for $k$ servers, written $P$\k/, by preemptively serving the $k$ jobs with highest priority at any time.
\begin{itemize}
\item \emph{Preemptive Shortest Job First} (PSJF) prioritizes the jobs with smallest \emph{original} size.
  PSJF achieves performance comparable to SRPT despite not tracking every job's age \cite{harchol2013performance}.
\item \emph{Remaining Size Times Original Size} (RS) prioritizes the jobs with the smallest product of original size and remaining size. RS is also known as Size Processing Time Product (SPTP).
  RS is optimal for minimizing \emph{mean slowdown} \cite{Hyytia:2012:MSH:2254756.2254763}.
\item \emph{Foreground-Background} (FB) prioritizes the jobs with smallest \emph{age},
  meaning the jobs that have been served the least so far.
  FB is also known as Least Attained Service (LAS).
  When the service requirement distribution has decreasing hazard rate,
  FB minimizes mean response time among all scheduling policies
  that do not have access to job sizes \cite{righter1989scheduling}.
\end{itemize}
We give the first response time bounds for PSJF\k/, RS\k/ and FB\k/.
We then use these bounds to prove the following optimality results,
under mild assumptions on the service requirement distribution:
\begin{itemize}
\item
  In the $\rho \to 1$ limit,
  PSJF\k/ and RS\k/ minimize mean response time among all scheduling policies
  (see Theorems~\ref{thm:psjf-optimal} and~\ref{thm:rs-optimal}).
\item
  In the $\rho \to 1$ limit,
  FB\k/ minimizes mean response time under the same conditions as FB\1/
  (see Theorem~\ref{thm:fb-optimal}).
\end{itemize}

Our analyses follow the same steps as in Section~\ref{sec:analysis}.
\begin{itemize}
\item Use the four categories of work to bound
  the response time of the tagged job~$j$
  in terms of virtual work and steady-state relevant work.
\item Bound virtual work.
\item Bound steady-state relevant work.
\end{itemize}
Because different scheduling policies prioritize jobs differently,
we use a different definition of ``relevant jobs'' for each policy.
Under PSJF\k/ and RS\k/,
the definition of relevant jobs is very similar to that for SRPT\k/,
allowing us to use familiar tools such as relevant busy periods $B_{\le x}(\cdot)$.
However, FB\k/ uses a somewhat different definition of relevant jobs,
resulting in a few changes to the analysis.

Finally, in Section~\ref{sub:fcfs},
we discuss why our technique \emph{does not} generalize to
the First-Come, First-Served (FCFS) scheduling policy.

\subsection{Preemptive Shortest Job First (PSJF\k/)}
\label{sec:psjf}
As usual, we consider a tagged job~$j$ of size~$x$.
Under PSJF\k/, another job~$\ell$ is \emph{relevant} to~$j$
if $\ell$ has \emph{original} size at most~$x$.
With this definition of relevance,
we divide work into the same four categories as in Section~\ref{sec:analysis},
namely tagged, old, new, and virtual.
This bounds the response time of $j$ by
\begin{equation}
  \label{eq:psjf-busy}
  T^\psjfk \le_\st
  B_{\le x}\Bigl(x + \rwork^\psjfk_{\le x} + \vwork^\psjfk(x)\Bigr).
\end{equation}

The proof of Lemma~\ref{lem:vwork} works nearly verbatim for PSJF\k/, so
\begin{equation}
  \label{eq:psjf-vwork}
  \vwork^\psjfk(x) \le (k - 1)x.
\end{equation}

The analysis of steady-state relevant work
is similar to that in Section~\ref{sec:srpt-rwork}.
We consider a pair of systems experiencing the same arrival sequence:
System~$1$, which uses PSJF\1/,
and System~$k$, which uses PSJF\k/.
We define $\Delta^\psjfk_{\le x}(t)$ to be
the difference between the amounts of relevant work in the two systems
at time~$t$.
We then bound $\Delta^\psjfk_{\le x}(t)$.

\begin{lemma}
  \label{lem:psjf-delta}
  The difference in relevant work between Systems~$1$ and~$k$ is bounded by
  \begin{equation*}
    \Delta^\psjfk_{\le x}(t) \le (k - 1)x.
  \end{equation*}
\end{lemma}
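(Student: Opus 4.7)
The plan is to mimic the proof of Lemma~\ref{lem:delta} from the SRPT\k/ analysis, exploiting the fact that under PSJF\k/ the definition of ``relevant'' is based on \emph{original} size rather than remaining size. This distinction turns out to strengthen every step of the argument, which is why the bound improves from $kx$ (as in SRPT) to $(k-1)x$.

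First I would split time into few-jobs intervals (fewer than~$k$ relevant jobs in System~$k$) and many-jobs intervals (at least $k$ relevant jobs in System~$k$), exactly as in Lemma~\ref{lem:delta}. For a time~$t$ in a few-jobs interval the bound is immediate: System~$k$ has at most $k-1$ relevant jobs, each of remaining size at most~$x$ (since under PSJF a relevant job's original size, and hence its remaining size, is at most~$x$), giving $\Delta^\psjfk_{\le x}(t) \le \rwork^{(k)}_{\le x}(t) \le (k-1)x$.

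For times in a many-jobs interval starting at~$s$, I would show (i) $\Delta^\psjfk_{\le x}(t) \le \Delta^\psjfk_{\le x}(s)$ and (ii) $\Delta^\psjfk_{\le x}(s) \le (k-1)x$. For (i) I decompose the change in relevant work in each system as $\mathtt{Arrivals} + \mathtt{NewlyRelevant} - \mathtt{Served}$. The key simplification over SRPT is that under PSJF a job's original size never changes, so $\mathtt{NewlyRelevant}^{(1)} = \mathtt{NewlyRelevant}^{(k)} = 0$ identically in \emph{both} systems: no irrelevant job can ever become relevant. The arrivals match because the arrival sequences are identical; and during the many-jobs interval System~$k$ has $\ge k$ relevant jobs, so its servers do relevant work at combined rate~$1$, yielding $\mathtt{Served}^{(k)} = t-s \ge \mathtt{Served}^{(1)}$. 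Combining these three comparisons gives $D^{(k)} \le D^{(1)}$ and hence (i).

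For (ii), because newly-relevant work is impossible under PSJF, a many-jobs interval can only start when \emph{a relevant job arrives while System~$k$ already has $k-1$ relevant jobs} — the second entry mechanism from Lemma~\ref{lem:delta} (irrelevant jobs becoming relevant) is closed off here. As in the SRPT proof, the same arrival occurs in System~$1$, so $\Delta^\psjfk_{\le x}(s) = \Delta^\psjfk_{\le x}(s^-) \le \rwork^{(k)}_{\le x}(s^-) \le (k-1)x$, since $s^-$ lies at the end of a few-jobs interval. The main (mild) obstacle is simply verifying that the PSJF priority rule does indeed preserve the crucial invariant that relevant jobs preempt irrelevant jobs — which holds because smaller original size means higher PSJF priority — so that the ``at least $k$ relevant jobs implies all $k$ servers serve relevant jobs'' step goes through. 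Once this is in place, the three comparisons above close the proof.
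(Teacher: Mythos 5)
Your proof is correct and follows essentially the same approach as the paper: same few-jobs/many-jobs split, same monotonicity argument for $\Delta^\psjfk_{\le x}$ during many-jobs intervals, and the same observation that under PSJF the ``irrelevant-becomes-relevant'' entry mechanism from Lemma~\ref{lem:delta} is impossible, which is precisely what tightens the bound to $(k-1)x$. The paper's proof is somewhat terser (it defers the monotonicity step to ``essentially the same argument as in Lemma~\ref{lem:delta}'' in a footnote), but your fleshed-out decomposition and your explicit check that relevant jobs always preempt irrelevant ones under PSJF\k/ are exactly the details being implicitly invoked.
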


\begin{proof}
  We define few-jobs intervals and many-jobs intervals as in Section~\ref{sec:srpt-rwork}.
  The case where $t$ is in a few-jobs interval is simple:
  there are at most $k - 1$ relevant jobs in System~$k$ at time~$t$,
  each of remaining size at most~$x$, so
  \begin{equation*}
    \Delta^\psjfk_{\le x}(t) \le (k - 1)x.
  \end{equation*}

  Suppose instead that $t$ is in a many-jobs interval.
  Let time~$s$ be the start of the many-jobs interval containing~$t$.
  By essentially the same argument as in the proof of Lemma~\ref{lem:delta},\footnote{%
    In fact, the argument for PSJF is slightly simpler than that for SRPT,
    because irrelevant jobs never become relevant under PSJF.}
  \begin{equation*}
    \Delta^\psjfk_{\le x}(t) \le \Delta^\psjfk_{\le x}(s).
  \end{equation*}
  It thus suffices to show $\Delta^\psjfk_{\le x}(s) \le (k - 1)x$.
  The only way a many-jobs interval can start under PSJF\k/
  is for a relevant job to arrive while System~$k$ has $k - 1$ relevant jobs.
  The same arrival occurs in System~$1$, so
  \begin{equation*}
    \Delta^\psjfk_{\le x}(s) = \Delta^\psjfk_{\le x}(s^-) \le (k - 1)x
  \end{equation*}
  because $s^-$, the instant before~$s$, is in a few-jobs interval.
\end{proof}

\begin{theorem}
  \label{thm:psjf}
  In an \mgk/,
  the response time of a job of size $x$ under PSJF\k/ is bounded by
  \begin{equation*}
    T^\psjfk(x) \le_\st W^\psjf(x) + B_{\le x}((2k - 1)x).
  \end{equation*}
\end{theorem}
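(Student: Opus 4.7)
The plan is to combine three ingredients already in hand: the response-time decomposition \eqref{eq:psjf-busy}, the virtual-work bound \eqref{eq:psjf-vwork}, and the relevant-work comparison in Lemma~\ref{lem:psjf-delta}. These are exactly the PSJF\k/ analogues of the pieces used in the proof of Theorem~\ref{thm:response_time}, so the structure of the argument will mirror that one.

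First, I would start from \eqref{eq:psjf-busy}, which gives
\[T^\psjfk(x) \le_\st B_{\le x}\bigl(x + \rwork^\psjfk_{\le x} + \vwork^\psjfk(x)\bigr).\]
Next, I would apply \eqref{eq:psjf-vwork} to substitute $\vwork^\psjfk(x) \le (k-1)x$. For the relevant-work term, I would couple System~$1$ (running PSJF\1/) with System~$k$ (running PSJF\k/) on the same arrival sequence, exactly as in Section~\ref{sec:srpt-rwork}. Lemma~\ref{lem:psjf-delta} then gives the sample-path bound $\rwork^\psjfk_{\le x}(t) \le \rwork^\psjf_{\le x}(t) + (k-1)x$ at every time~$t$, which transfers to the stationary distributions of both quantities and, by PASTA, to the relevant work seen by the tagged job on arrival. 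Combining these, the initial work appearing inside the busy period on the right side of \eqref{eq:psjf-busy} is bounded in distribution by
\[x + \rwork^\psjf_{\le x} + (k-1)x + (k-1)x = \rwork^\psjf_{\le x} + (2k-1)x.\]

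Using monotonicity of $B_{\le x}(\cdot)$ in its starting work together with the standard additive decomposition of relevant busy periods (as in the proof of Theorem~\ref{thm:response_time}), I would then split $B_{\le x}\bigl(\rwork^\psjf_{\le x} + (2k-1)x\bigr)$ into $B_{\le x}\bigl(\rwork^\psjf_{\le x}\bigr) + B_{\le x}\bigl((2k-1)x\bigr)$. Finally, the same tagged-job argument that yields $W^\srpt(x) = B_{\le x}(\rwork^\srpt_{\le x})$ in Section~\ref{sub:srpt-1} applies verbatim to PSJF\1/: a tagged PSJF\1/ job of size~$x$ waits exactly through a relevant busy period (for jobs of original size at most~$x$) started by the relevant work it observes on arrival. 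This identifies the first summand above as $W^\psjf(x)$ and gives the claimed bound.

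The only subtlety, and it is a mild one, is passing from the deterministic pointwise bound of Lemma~\ref{lem:psjf-delta} to a stochastic inequality that can be substituted into \eqref{eq:psjf-busy}; the coupling-plus-PASTA step handles this cleanly. Beyond that, the argument is pure bookkeeping, and in fact it is slightly simpler than the SRPT\k/ case because under PSJF\k/ an irrelevant job can never become relevant, which eliminates the one delicate case from the proof of Lemma~\ref{lem:delta}.
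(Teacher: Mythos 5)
Your argument matches the paper's proof: starting from \eqref{eq:psjf-busy}, substituting the virtual-work bound \eqref{eq:psjf-vwork} and the relevant-work comparison of Lemma~\ref{lem:psjf-delta} to get initial work $\rwork^\psjf_{\le x} + (2k-1)x$, splitting the busy period, and identifying $B_{\le x}\bigl(\rwork^\psjf_{\le x}\bigr)$ with $W^\psjf(x)$. You spell out the coupling-plus-PASTA step that the paper leaves implicit, but the approach and the arithmetic are the same.
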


\begin{proof}
  By \eqref{eq:psjf-busy}, \eqref{eq:psjf-vwork}, and Lemma~\ref{lem:psjf-delta},
  \begin{align*}
    T^\psjfk(x)
    &\le_\st B_{\le x}\bigl(\rwork^\psjf_{\le x} + (2k - 1)x\bigr) \\
    &= B_{\le x}\bigl(\rwork^\psjf_{\le x}\bigr) + B_{\le x}(2k - 1).
  \end{align*}
  The waiting time in PSJF\1/ is
  \begin{equation*}
    W^\psjf(x) = B_{\le x}\bigl(\rwork_{\le x}^\psjf\bigr),
  \end{equation*}
  giving the desired bound.
\end{proof}

With the bound derived in Theorem~\ref{thm:psjf}, we can prove that PSJF\k/ also minimizes mean response time in the heavy-traffic limit.

\begin{theorem}
  \label{thm:psjf-optimal}
    In an \mgk/ with any service requirement distribution~$S$ which is either
  \begin{enumerate*}[(i)]
  \item
    bounded or
  \item
    unbounded with tail function of upper Matuszewska index\footnote{%
      See Section~\ref{sec:matuszewska} or Appendix~\ref{app:matuszewska}.} less than~$-2$,
  \end{enumerate*}
  \[\lim_{\rho \rightarrow 1} \frac{\E[\big]{T^\psjfk}}{\E*{T^\srpt}} = 1.\]
\end{theorem}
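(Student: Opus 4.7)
The plan is to mirror the proof of Theorem~\ref{thm:ratio}, with Theorem~\ref{thm:psjf} playing the role of Theorem~\ref{thm:response_time}. Since SRPT\1/ minimizes mean response time among all policies, we have $\E[\big]{T^\psjfk} \ge \E[\big]{T^\srpt}$ for free, so the ratio is bounded below by~$1$. The task is to establish the matching upper bound in the limit.

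First I would take expectation over the job size~$S$ in Theorem~\ref{thm:psjf} to obtain
\begin{equation*}
\E[\big]{T^\psjfk} \le \E[\big]{W^\psjf(S)} + \E*{B_{\le S}((2k - 1)S)},
\end{equation*}
and then handle the two terms separately. For the waiting-time term, I would invoke the fact noted in the discussion following Theorem~\ref{thm:response_time} (and due to \cite{wierman2005nearly}) that waiting time under PSJF\1/ is dominated by waiting time under SRPT\1/, so
\begin{equation*}
\E[\big]{W^\psjf(S)} \le \E[\big]{W^\srpt(S)} \le \E[\big]{T^\srpt}.
\end{equation*}
For the busy-period term, I would reuse the change-of-variables computation from the proof of Lemma~\ref{lem:ratio_for_bound} with the constant $2k - 1$ in place of $2k$, yielding
\begin{equation*}
\E*{B_{\le S}((2k - 1)S)} = \frac{2k - 1}{\lambda}\ln\frac{1}{1 - \rho} = \Theta\!\left(\log\frac{1}{1 - \rho}\right).
\end{equation*}

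To finish, I would divide through by $\E[\big]{T^\srpt}$ and apply Lemma~\ref{lem:srpt-1-bound}, which guarantees that $\log(1/(1 - \rho)) = o\bigl(\E[\big]{T^\srpt}\bigr)$ for the distributions under consideration, so the busy-period contribution vanishes in the limit. This gives $\limsup_{\rho \to 1} \E[\big]{T^\psjfk}/\E[\big]{T^\srpt} \le 1$, which together with the lower bound proves equality.

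I do not expect any serious obstacle, since the heavy lifting is already absorbed into Theorem~\ref{thm:psjf} and Lemma~\ref{lem:psjf-delta}, which together show that the gap between $T^\psjfk(x)$ and $W^\psjf(x)$ is at most a single relevant busy period started by work linear in~$x$. The one subtle choice in the write-up is to bound $\E[\big]{W^\psjf(S)}$ directly against $\E[\big]{T^\srpt}$ via the pointwise waiting-time dominance, rather than routing through $\E[\big]{T^\psjf}$; this avoids needing any separate heavy-traffic asymptotic for PSJF\1/ and lets us piggyback entirely on Lemma~\ref{lem:srpt-1-bound}.
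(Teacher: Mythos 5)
Your proposal is correct and follows essentially the same route as the paper: reduce via Theorem~\ref{thm:psjf} and the stochastic dominance $W^\psjf(x) \le_\st W^\srpt(x)$, then dispose of the busy-period term using the $\Theta(\log(1/(1-\rho)))$ estimate from the proof of Lemma~\ref{lem:ratio_for_bound} together with Lemma~\ref{lem:srpt-1-bound}. The only cosmetic difference is that the paper further bounds $(2k-1)x \le 2kx$ to make the bound literally identical to that of Theorem~\ref{thm:response_time} and then says ``proceed as in Theorem~\ref{thm:ratio}'', whereas you carry the $(2k-1)$ constant through the same computation directly.
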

\begin{proof}
  From Theorem~\ref{thm:psjf}, we know that
  \begin{equation*}
    T^\psjfk(x) \le_\st W^\psjf(x) + B_{\le x}((2k - 1)x)
  \end{equation*}
  However, $W^\psjf(x) \le_\st W^\srpt(x)$ \cite{wierman2005nearly}. Therefore,
  \begin{align*}
    T^\psjfk(x) &\le_\st W^\srpt(x) + B_{\le x}((2k - 1)x)\\
               &\le_\st W^\srpt(x) + B_{\le x}(2kx)
  \end{align*}
  This bound on $T^\psjfk(x)$ is the same as the bound on $T^\srptk(x)$ given in Theorem~\ref{thm:response_time}. The rest of the proof proceeds as in the proof of Theorem~\ref{thm:ratio}.
\end{proof}

As in Corollary~\ref{cor:optimal}, Theorem~\ref{thm:psjf-optimal} and the optimality of SRPT\1/ imply that PSJF\k/ is optimal in the heavy-traffic limit.

\subsection{Remaining Size Times Original Size (RS\k/)}

As usual, we consider a tagged job~$j$ of size~$x$.
When $j$ has remaining size~$y$,
another job~$\ell$ is \emph{relevant} to~$j$
if the \emph{product of $\ell$'s original size and remaining size}
is at most~$xy$.
In particular, if $\ell$ is relevant to~$j$,
then $\ell$'s remaining size is at most~$x$.
With this definition of relevance,
we divide work into the same four categories as in Section~\ref{sec:analysis},
namely tagged, old, new, and virtual.
This bounds the response time of $j$ by
\begin{equation}
  \label{eq:rs-busy}
  T^\rsk \le_\st
  B_{\le x}\Bigl(x + \rwork^\rsk_{\le x} + \vwork^\rsk(x)\Bigr).
\end{equation}

The proof of Lemma~\ref{lem:vwork} works nearly verbatim for RS\k/, so
\begin{equation}
  \label{eq:rs-vwork}
  \vwork^\rsk(x) \le (k - 1)x.
\end{equation}

The analysis of steady-state relevant work
is similar to that in Section~\ref{sec:srpt-rwork}.
We consider a pair of systems experiencing the same arrival sequence:
System~$1$, which uses RS\1/,
and System~$k$, which uses RS\k/.
We define $\Delta^\rsk_{\le x}(t)$ to be
the difference between the amounts of relevant work in the two systems
at time~$t$.
We then bound $\Delta^\rsk_{\le x}(t)$.

\begin{lemma}
  \label{lem:rs-delta}
  The difference in relevant work between Systems~$1$ and~$k$ is bounded by
  \begin{equation*}
    \Delta^\rsk_{\le x}(t) \le kx.
  \end{equation*}
\end{lemma}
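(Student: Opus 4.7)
The plan is to mirror the proof of Lemma~\ref{lem:delta} almost verbatim, using the RS-specific notion of relevance. First I would record the key structural fact: under RS, if $\ell$ is relevant to $j$ (original size $x$), then $s_\ell \cdot r_\ell \le x^2$, and since $s_\ell \ge r_\ell$, this forces $r_\ell \le x$. So every relevant job still has remaining size at most $x$, and the crude ``count jobs times $x$'' estimates used for SRPT still apply.

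Following the template, I would split $t$ into few-jobs and many-jobs intervals (defined with respect to System~$k$'s relevant jobs). In a few-jobs interval, System~$k$ holds at most $k-1$ relevant jobs each of remaining size at most $x$, giving $\Delta^\rsk_{\le x}(t) \le \rwork^\B_{\le x}(t) \le (k-1)x \le kx$ directly. For $t$ in a many-jobs interval starting at time $s$, I would prove $\Delta^\rsk_{\le x}(t) \le \Delta^\rsk_{\le x}(s)$ via the same decomposition $D = \mathtt{Arrivals} + \mathtt{NewlyRelevant} - \mathtt{Served}$ as before: Arrivals match because both systems see the same jobs; $\mathtt{NewlyRelevant}^\A \ge 0$ trivially, while $\mathtt{NewlyRelevant}^\B = 0$ because throughout the many-jobs interval System~$k$ has at least $k$ relevant jobs and RS gives strict priority to the $k$ smallest products, so every server serves a relevant job and no irrelevant job is given processing time; and $\mathtt{Served}^\B = t-s$ since all $k$ servers are busy on relevant work at combined rate $1$, versus $\mathtt{Served}^\A \le t-s$.

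It then remains to bound $\Delta^\rsk_{\le x}(s) \le kx$ at the start of a many-jobs interval. There are two ways to enter one. Either a new relevant job arrives in System~$k$ while it holds exactly $k-1$ relevant jobs, in which case the same arrival occurs in System~$1$ and $\Delta^\rsk_{\le x}(s) = \Delta^\rsk_{\le x}(s^-) \le (k-1)x$ by the few-jobs bound. Or some irrelevant jobs in System~$k$ cross the relevance threshold because their remaining sizes (and hence their products $s_\ell r_\ell$) shrink through service. Only jobs actually being served have changing products, so if $i \ge 1$ irrelevant jobs become relevant at $s$, those $i$ jobs were in service at $s^-$; since relevant jobs take priority, at most $k - i$ relevant jobs existed at $s^-$, and after the $i$ crossings at most $k$ relevant jobs exist at $s$, each of remaining size at most $x$, yielding $\Delta^\rsk_{\le x}(s) \le \rwork^\B_{\le x}(s) \le kx$.

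The main obstacle I expect is verifying the ``priority alignment'' claim used twice above: that under RS the relevant/irrelevant split defined by the static threshold $x^2$ is respected by RS's dynamic smallest-product-first rule. This reduces to the almost-tautological observation that $s_\ell r_\ell \le x^2$ versus $s_\ell r_\ell > x^2$ is exactly a threshold on RS's priority key, so whenever $\ge k$ relevant jobs exist, they occupy all $k$ servers and no irrelevant job is served. Once this is in hand, every other step is a direct adaptation of Lemma~\ref{lem:delta}, with the RS-specific complication (shrinking products causing irrelevant$\to$relevant transitions during service) handled exactly as in the SRPT case.
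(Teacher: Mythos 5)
Your proof is correct and takes essentially the same approach as the paper, which simply notes that the argument is analogous to Lemma~\ref{lem:delta}; you have faithfully spelled out the key details of that analogy, in particular that the static RS threshold $s_\ell r_\ell \le x^2$ forces $r_\ell \le x$ (since $s_\ell \ge r_\ell$), that RS's priority rule respects this threshold so no irrelevant job is served during a many-jobs interval, and that irrelevant-to-relevant transitions can only occur for jobs in service, so the start-of-interval bound of $kx$ goes through exactly as for SRPT.
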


\begin{proof}
  Even though RS uses a definition of relevant jobs different from SRPT's,
  the proof is analogous to that of Lemma~\ref{lem:delta}.
\end{proof}

\begin{theorem}
  \label{thm:rs}
  In an \mgk/,
  the response time of a job of size $x$ under RS\k/ is bounded by
  \begin{equation*}
    T^\rsk(x) \le_\st W^\rs(x) + B_{\le x}((2k - 1)x).
  \end{equation*}
\end{theorem}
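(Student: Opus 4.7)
The plan is to mirror the proof of Theorem~\ref{thm:psjf} essentially verbatim, plugging in the three RS-specific ingredients already established. I would begin with the response-time decomposition~\eqref{eq:rs-busy}, which expresses $T^\rsk(x)$ stochastically as a relevant busy period started by tagged work $x$, steady-state relevant work $\rwork^\rsk_{\le x}$, and virtual work $\vwork^\rsk(x)$.

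Next, I would substitute the virtual-work bound~\eqref{eq:rs-vwork}, $\vwork^\rsk(x) \le (k-1)x$, and apply Lemma~\ref{lem:rs-delta} to control the unknown $\rwork^\rsk_{\le x}$ in terms of $\rwork^\rs_{\le x}$. Because the two coupled systems experience the same arrival process, the sample-path inequality $\Delta^\rsk_{\le x}(t) \le kx$ transfers to the stochastic bound $\rwork^\rsk_{\le x} \le_\st \rwork^\rs_{\le x} + kx$ at stationarity. Combining these substitutions inside the busy-period operator and then splitting via $B_{\le x}(V + c) =_\st B_{\le x}(V) + B_{\le x}(c)$ for deterministic $c$, the bound separates into a single-server waiting-time contribution $B_{\le x}(\rwork^\rs_{\le x})$ plus a deterministic busy period $B_{\le x}((2k-1)x)$.

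Finally, I would identify $B_{\le x}(\rwork^\rs_{\le x})$ with $W^\rs(x)$ by invoking the tagged-job identity reviewed in Section~\ref{sub:srpt-1}, adapted to RS\1/'s notion of relevance. The one point worth pausing over is whether that identity really does carry over to RS's size-dependent notion of relevance: this is fine because the waiting-time piece only concerns the interval before $j$ enters service, during which $j$'s remaining size is still $x$ and the relevance cutoff $xy$ reduces to $x^2$, which is exactly what $\rwork^\rs_{\le x}$ tracks. With the vwork bound~\eqref{eq:rs-vwork} and Lemma~\ref{lem:rs-delta} already in hand, no genuinely new probabilistic obstacle remains; the proof is effectively an algebraic rearrangement, and the main intellectual work has been absorbed into verifying that the few-jobs/many-jobs dichotomy of Lemma~\ref{lem:delta} survives the transition to RS's size-dependent relevance (which is the job of Lemma~\ref{lem:rs-delta}).
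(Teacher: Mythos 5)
Your approach is correct and mirrors the paper's own proof step for step: decompose via \eqref{eq:rs-busy}, substitute \eqref{eq:rs-vwork} and Lemma~\ref{lem:rs-delta}, split the relevant busy period, and identify $B_{\le x}\bigl(\rwork^\rs_{\le x}\bigr)$ with $W^\rs(x)$. Your justification for that last identification (that before $j$ enters service its remaining size is still $x$, so the RS priority cutoff is $x^2$ and new arrivals of size at most $x$ are exactly the relevant ones) is also the right reason.

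However, check your arithmetic in the busy-period argument. You are plugging in tagged work $x$, virtual work at most $(k-1)x$ from \eqref{eq:rs-vwork}, and the relevant-work gap at most $kx$ from Lemma~\ref{lem:rs-delta}. These sum to
\begin{equation*}
  x + (k-1)x + kx = 2kx,
\end{equation*}
not $(2k-1)x$. The $(2k-1)x$ constant is what one gets in the PSJF case, where Lemma~\ref{lem:psjf-delta} gives the tighter bound $\Delta^\psjfk_{\le x}(t) \le (k-1)x$ because under PSJF an irrelevant job can never become relevant. That improvement does \emph{not} carry over to RS: as a job is served its remaining size drops, so its product of original and remaining size drops, and an initially irrelevant job can cross the $x^2$ threshold and become relevant, which is exactly why Lemma~\ref{lem:rs-delta} only yields $kx$ (the same as the SRPT bound of Lemma~\ref{lem:delta}). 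So the bound your argument actually establishes is $T^\rsk(x) \le_\st W^\rs(x) + B_{\le x}(2kx)$. This is in fact what the paper's own proof of Theorem~\ref{thm:rs} derives; the $(2k-1)x$ appearing in the theorem statement is a typo (apparently carried over from Theorem~\ref{thm:psjf}). The discrepancy is harmless for the downstream optimality result, since Theorem~\ref{thm:rs-optimal} only needs the weaker $W^\srpt(x) + B_{\le x}(2kx)$ bound, but you should not claim the $(2k-1)x$ constant from this argument.
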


\begin{proof}
  By \eqref{eq:rs-busy}, \eqref{eq:rs-vwork}, and Lemma~\ref{lem:rs-delta},
  \begin{align*}
    T^\rsk(x)
    &\le_\st B_{\le x}\bigl(\rwork^\rs_{\le x} + 2kx\bigr) \\
    &\le_\st B_{\le x}\bigl(\rwork^\rs_{\le x}\bigr) + B_{\le x}(2kx).
  \end{align*}
  The waiting time in RS\1/ is
  \begin{equation*}
    W^\rs(x) = B_{\le x}\bigl(\rwork_{\le x}^\psjf\bigr),
  \end{equation*}
  giving the desired bound.
\end{proof}

With the bound derived in Theorem~\ref{thm:rs}, we can prove that RS\k/ also minimizes mean response time in the heavy-traffic limit.

\begin{theorem}
  \label{thm:rs-optimal}
  In an \mgk/ with any service requirement distribution~$S$ which is either
  \begin{enumerate*}[(i)]
  \item
    bounded or
  \item
    unbounded with tail function of upper Matuszewska index\footnote{%
      See Section~\ref{sec:matuszewska} or Appendix~\ref{app:matuszewska}.} less than~$-2$,
  \end{enumerate*}
  \[\lim_{\rho \rightarrow 1} \frac{\E[\big]{T^\rsk}}{\E*{T^\srpt}} = 1.\]
\end{theorem}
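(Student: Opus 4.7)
The plan is to mimic the proof of Theorem~\ref{thm:psjf-optimal} almost verbatim, using Theorem~\ref{thm:rs} as the starting point. Since SRPT\1/ minimizes mean response time among all policies, we have $\E[\big]{T^\rsk} \ge \E[\big]{T^\srpt}$, so the ratio is at least $1$, and it suffices to prove
\[
  \lim_{\rho \to 1} \frac{\E[\big]{T^\rsk}}{\E[\big]{T^\srpt}} \le 1.
\]

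First, I would start from Theorem~\ref{thm:rs}, which gives
\[
  T^\rsk(x) \le_\st W^\rs(x) + B_{\le x}((2k - 1)x).
\]
The key step is to replace $W^\rs(x)$ by $W^\srpt(x)$. Since RS belongs to the SMART class of scheduling policies, the stochastic inequality $W^\rs(x) \le_\st W^\srpt(x)$ is established in \cite{wierman2005nearly} (by the same reasoning cited in the proof of Theorem~\ref{thm:psjf-optimal} for PSJF). Combining these two inequalities and loosening the $(2k-1)x$ to $2kx$ yields
\[
  T^\rsk(x) \le_\st W^\srpt(x) + B_{\le x}(2kx),
\]
which is exactly the stochastic bound on $T^\srptk(x)$ established in Theorem~\ref{thm:response_time}.

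Having reduced to the SRPT\k/ bound, the remainder of the argument is identical to the proof of Theorem~\ref{thm:ratio}. Taking expectation over $x$ drawn from $S$ gives
\[
  \E[\big]{T^\rsk} \le \E{H(S)},
\]
where $H(x) = \E[\big]{W^\srpt(x) + B_{\le x}(2kx)}$ is the quantity defined in \eqref{eq:H-bound}. Lemma~\ref{lem:ratio_for_bound} then shows that $\E{H(S)}/\E[\big]{T^\srpt} \to 1$ as $\rho \to 1$ under the stated Matuszewska or boundedness hypothesis on $S$, completing the upper bound and hence the proof.

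The main obstacle, and the only nontrivial step beyond reusing prior results, is justifying $W^\rs(x) \le_\st W^\srpt(x)$; this is exactly the same obstacle that appeared for PSJF in Theorem~\ref{thm:psjf-optimal} and is resolved by an appeal to the SMART-policy waiting-time comparison of \cite{wierman2005nearly}. Once that fact is in hand, no new stochastic analysis is needed: the virtual-work bound \eqref{eq:rs-vwork}, the relevant-work comparison Lemma~\ref{lem:rs-delta}, and the heavy-traffic dominance $\E{B_{\le S}(2kS)} = \Theta(\log(1/(1-\rho))) = o(\E[\big]{T^\srpt})$ from the proof of Lemma~\ref{lem:ratio_for_bound} do all of the remaining work.
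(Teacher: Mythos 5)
Your proposal is correct and matches the paper's own proof essentially verbatim: both start from Theorem~\ref{thm:rs}, invoke $W^\rs(x) \le_\st W^\srpt(x)$ from the SMART-class comparison of \cite{wierman2005nearly}, reduce to the SRPT\k/ bound of Theorem~\ref{thm:response_time}, and then invoke Lemma~\ref{lem:ratio_for_bound} exactly as in the proof of Theorem~\ref{thm:ratio}. The only (harmless) difference is that you explicitly loosen $(2k-1)x$ to $2kx$, whereas the paper silently treats the Theorem~\ref{thm:rs} bound as $2kx$ to begin with.
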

\begin{proof}
  From Theorem~\ref{thm:rs}, we know that
  \begin{equation*}
    T^\rsk(x) \le_\st W^\rs(x) + B_{\le x}(2kx)
  \end{equation*}
  However, $W^\rs(x) \le_\st W^\srpt(x)$ \cite{wierman2005nearly}. Therefore,
  \begin{equation*}
    T^\rsk(x) \le_\st W^\srpt(x) + B_{\le x}(2kx)
  \end{equation*}
  This bound on $T^\rsk(x)$ is the same as the bound on $T^\srptk(x)$ given in Theorem~\ref{thm:response_time}. The rest of the proof proceeds as in the proof of Theorem~\ref{thm:ratio}.
\end{proof}

As in Corollary~\ref{cor:optimal}, Theorem~\ref{thm:rs-optimal} and the optimality of SRPT\1/ imply that RS\k/ is optimal in the heavy-traffic limit.

We have so far shown response time bounds for SRPT\k/, PSJF\k/, and RS\k/
that are strong enough to prove asymptotic optimality in heavy traffic.
We conjecture that similar bounds and optimality results
hold for multiserver variants of
any policy in the SMART class \cite{wierman2005nearly},
which includes SRPT, PSJF, and RS.

\subsection{Foreground-Background (FB\k/)}

The analysis of FB\k/ proceeds similarly to the analysis of SRPT\k/
but with a few more changes than were needed for PSJF\k/ and RS\k/.
To analyze PSJF\k/ and RS\k/,
we followed the same outline as Section~\ref{sec:analysis}
with a small change to the definition of relevant jobs.
In particular,
we reused the notion of relevant busy periods $B_{\le x}(\cdot)$
from Definition~\ref{def:busy_period}.
In contrast, as we will see shortly,
FB\k/ has a significantly different definition of relevant jobs,
so the definition of relevant busy periods will also change.

As usual, we consider a tagged job~$j$ of size~$x$.
Recall that FB prioritizes the jobs of smallest \emph{age}, or attained service.
When $j$ arrives, its age is~$0$,
so it has priority over all other jobs in the system.
However, as $j$ is served, its age increases and its priority gets worse.
The key to the usual single-server analysis of FB is that
to define relevant work, we have to look at $j$'s \emph{worst future priority}
\cite{schrage1967queue, harchol2013performance, scully2018soap}.
This worst priority occurs when $j$ has age~$x$, an instant before completion,
giving us the following definition of relevant jobs.

\begin{definition}
  \label{def:relevant_fb}
  Suppose job~$j$ has original size~$x$.
  Under FB\k/, a job~$\ell$ is \emph{relevant} to job~$j$
  if $\ell$ has age at most~$x$.
  Otherwise $\ell$ is \emph{irrelevant} to~$j$.
\end{definition}

There is an important difference between the notions of relevance
for SRPT\k/ and FB\k/.
Under SRPT\k/,
each arriving job starts as either relevant or irrelevant to~$j$
and remains that way for $j$'s entire time in the system.
In contrast, under FB\k/,
\emph{every new arrival is at least temporarily relevant to~$j$}.
Specifically, if a new arrival~$\ell$ has size at most~$x$,
then $\ell$ is relevant to~$j$ for its entire time in the system.
If $\ell$ instead has size greater than~$x$,
then $\ell$ is relevant to~$j$ only until it reaches age~$x$,
at which point it becomes irrelevant.
This observation motivates the definition of relevant busy periods for FB\k/.

\begin{definition}
  \label{def:busy_period_fb}
  Under FB\k/,
  a \emph{relevant busy period} for a job of size~$x$ started by
  (possibly random) amount of work~$V$,
  written~$B_{\bar x}(V)$,
  is the amount of time it takes for
  a work-conserving system that starts with $V$ work to become empty,
  where \emph{every arrival's service is truncated at age~$x$}.
  A relevant busy period has expectation
  \begin{equation*}
    \E{B_{\bar x}(V)} = \frac{\E{V}}{1 - \rho_{\bar x}}.
  \end{equation*}
  Above, $\rho_{\bar x}$ is the \emph{relevant load} for a job of size~$x$,
  which is the total load due to relevant jobs.
  Its value is
  \begin{equation*}
    \rho_{\bar x} = \lambda\E{\min(S, x)},
  \end{equation*}
  because each arrival is relevant only until it reaches age~$x$.
\end{definition}

We make a similar modification to the definition of steady-state relevant work.

\begin{definition}
  \label{def:rwork_fb}
  The \emph{steady-state relevant work} for a job of size~$x$ under FB\k/,
  written $\rwork^\fbk_{\bar x}$,
  is the sum of \emph{remaining truncated sizes} of all jobs
  observed at a random point in time.
  A job's remaining truncated size is the amount of time until it either
  completes or reaches age~$x$.
\end{definition}

Armed with Definitions~\ref{def:relevant_fb},
\ref{def:busy_period_fb}, and~\ref{def:rwork_fb},
we divide work into the same four categories as in Section~\ref{sec:analysis},
namely tagged, old, new, and virtual.
This bounds the response time of $j$ by
\begin{equation}
  \label{eq:fb-busy}
  T^\fbk \le_\st
  B_{\bar x}\Bigl(x + \rwork^\fbk_{\bar x} + \vwork^\fbk(x)\Bigr).
\end{equation}

The proof of Lemma~\ref{lem:vwork} works nearly verbatim for FB\k/, so
\begin{equation}
  \label{eq:fb-vwork}
  \vwork^\fbk(x) \le (k - 1)x.
\end{equation}

The analysis of steady-state relevant work
is similar to that in Section~\ref{sec:srpt-rwork}.
We consider a pair of systems experiencing the same arrival sequence:
System~$1$, which uses FB\1/,
and System~$k$, which uses PSJF\k/.
We define $\Delta^\fbk_{\bar x}(t)$ to be
the difference between the amounts of relevant work in the two systems
at time~$t$.
We then bound $\Delta^\fbk_{\bar x}(t)$.

\begin{lemma}
  \label{lem:fb-delta}
  The difference in relevant work between Systems~$1$ and~$k$ is bounded by
  \begin{equation*}
    \Delta^\fbk_{\bar x}(t) \le (k - 1)x.
    \end{equation*}
\end{lemma}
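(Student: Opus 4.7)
The plan is to mirror the structure of the proofs of Lemmas~\ref{lem:delta} and~\ref{lem:psjf-delta}, adapted to FB's age-based notion of relevance (Definition~\ref{def:relevant_fb}) and the associated remaining truncated size (Definition~\ref{def:rwork_fb}). I would split times~$t$ into few-jobs intervals, where System~$k$ has fewer than~$k$ relevant jobs, and many-jobs intervals, where it has at least~$k$. In a few-jobs interval, each of System~$k$'s at most $k-1$ relevant jobs has remaining truncated size at most~$x$, so $\Delta^\fbk_{\bar x}(t) \le \rwork^\B_{\bar x}(t) \le (k-1)x$ directly.

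For $t$ in a many-jobs interval starting at time~$s$, I would show $\Delta^\fbk_{\bar x}(t) \le \Delta^\fbk_{\bar x}(s) \le (k-1)x$ in two sub-steps. For the first inequality, I would compare how relevant work evolves on $[s, t]$ in the two systems, exactly as in the proof of Lemma~\ref{lem:delta}. Both systems see identical arrivals, each contributing $\min(S, x)$ to relevant work. Throughout the many-jobs interval, System~$k$ has at least~$k$ relevant jobs, and because FB\k/ prefers the smallest-age jobs, all~$k$ servers serve relevant jobs, so relevant work decreases at combined rate~$1$; System~$1$ decreases its relevant work at rate at most~$1$. Crucially, the remaining truncated size of a job decays continuously to~$0$ as its age approaches~$x$, so the transition from relevant to irrelevant produces no jump in relevant work; in particular, no analogue of the ``NewlyRelevant'' bookkeeping term from the SRPT proof is required, since under FB jobs only ever lose relevance, never gain it. Combining these observations yields $D^\B \le D^\A$ and hence $\Delta^\fbk_{\bar x}(t) \le \Delta^\fbk_{\bar x}(s)$.

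For the second inequality, I would bound $\Delta^\fbk_{\bar x}(s)$ by characterizing how many-jobs intervals can start under FB\k/. Because existing jobs can only gain age, and hence only lose (not gain) relevance, the only way for System~$k$'s count of relevant jobs to cross from $k-1$ to~$k$ is via a new arrival. So the many-jobs interval begins at an arrival instant~$s$, with the same arrival occurring in System~$1$, giving $\Delta^\fbk_{\bar x}(s) = \Delta^\fbk_{\bar x}(s^-)$. At~$s^-$, System~$k$ is in a few-jobs interval with at most $k-1$ relevant jobs, so $\Delta^\fbk_{\bar x}(s^-) \le \rwork^\B_{\bar x}(s^-) \le (k-1)x$ exactly as in the few-jobs case.

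The main subtlety, compared to the SRPT proof, lies in confirming that the age-based notion of relevance never produces an upward jump in $\Delta^\fbk_{\bar x}$ during a many-jobs interval. Because the truncated-size function decays continuously with age and reaches~$0$ precisely at the moment a job becomes irrelevant, relevant work can only grow via new arrivals, which affect both systems identically. This is exactly what permits the bound to match the PSJF\k/ value $(k-1)x$ rather than the weaker SRPT\k/ value $kx$.
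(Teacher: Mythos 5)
Your proof is correct and takes essentially the same approach as the paper's, which simply defers to the proof of Lemma~\ref{lem:psjf-delta} on the grounds that, like PSJF, irrelevant jobs never become relevant under FB. You fill in the details the paper elides, and in particular you correctly identify the two points that make the PSJF-style argument (and the $(k-1)x$ rather than $kx$ bound) go through for FB: many-jobs intervals can only begin via an arrival, since aging only removes relevance; and the remaining truncated size $\min(S,x) - \text{age}$ decays continuously to~$0$ as age approaches~$x$, so the loss of relevance introduces no jump and no $\mathtt{NewlyRelevant}$-type term is needed.
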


\begin{proof}
  Even though FB uses a definition of relevant jobs
  different from PSJF's,\footnote{%
    We draw an analogy with PSJF rather than SRPT because
    under both FB and PSJF, irrelevant jobs never become relevant.}
  the proof is analogous to that of Lemma~\ref{lem:psjf-delta}.
\end{proof}

\begin{theorem}
  \label{thm:fb}
  In an \mgk/,
  the response time of a job of size $x$ under FB\k/ is bounded by
  \begin{equation*}
    T^\fbk(x) \le_\st B_{\bar x}\bigl(\rwork^\fbk_{\bar x} + (2k - 1)x\bigr).
  \end{equation*}
\end{theorem}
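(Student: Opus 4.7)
The plan is to follow the same three-step template already used for SRPT\k/ (Theorem~\ref{thm:response_time}), PSJF\k/ (Theorem~\ref{thm:psjf}), and RS\k/ (Theorem~\ref{thm:rs}): start from the four-category work decomposition of the tagged job's response time, and then replace each of the ``virtual work'' and ``relevant work'' summands with the worst-case upper bound established in this subsection. For FB\k/, the four-category decomposition is already recorded in \eqref{eq:fb-busy}, the virtual-work bound in \eqref{eq:fb-vwork}, and the relevant-work gap bound in Lemma~\ref{lem:fb-delta}. The one novelty relative to the earlier SMART-style analyses is that the enclosing busy period is the truncated-size busy period $B_{\bar x}(\cdot)$ of Definition~\ref{def:busy_period_fb}, reflecting the fact that under FB a new arrival of size greater than $x$ is relevant only until it reaches age~$x$.

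Concretely, I would start from \eqref{eq:fb-busy},
\[T^\fbk(x) \le_\st B_{\bar x}\bigl(x + \rwork^\fbk_{\bar x} + \vwork^\fbk(x)\bigr),\]
and replace each summand inside the busy period by a worst-case upper bound. Plugging in $\vwork^\fbk(x) \le (k-1)x$ from \eqref{eq:fb-vwork}, and invoking Lemma~\ref{lem:fb-delta} on a coupling of an FB\1/ system and an FB\k/ system driven by the same arrival sequence to get $\rwork^\fbk_{\bar x} \le \rwork^\fb_{\bar x} + (k-1)x$, the three constant-in-$x$ contributions---the tagged job's own $x$, the $(k-1)x$ gap from Lemma~\ref{lem:fb-delta}, and the $(k-1)x$ of virtual work---sum to exactly $(2k-1)x$, while the remaining stochastic piece reduces to $\rwork^\fb_{\bar x}$. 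This recovers precisely the right-hand side of the theorem.

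I do not expect any real obstacle, because every upstream ingredient is already in place; the structure of the argument is identical to the PSJF\k/ proof of Theorem~\ref{thm:psjf}. The one detail most worth double-checking is that the ``new work'' category is correctly absorbed by a $B_{\bar x}(\cdot)$ expansion rather than a $B_{\le x}(\cdot)$ expansion: any arrival of size greater than $x$ has priority over the tagged job only until it reaches age~$x$, so its relevant contribution is $\min(\text{size}, x)$, which is precisely the truncation built into $\rho_{\bar x}$ and $B_{\bar x}(\cdot)$. Once that is verified, the rest of the argument is routine bookkeeping modeled directly on the proof of Theorem~\ref{thm:psjf}, with Definitions~\ref{def:relevant_fb}, \ref{def:busy_period_fb}, and~\ref{def:rwork_fb} substituted in for their SRPT analogues.
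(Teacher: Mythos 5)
Your proof is correct and is exactly the paper's intended argument: the paper's own proof of Theorem~\ref{thm:fb} is the one-line combination of \eqref{eq:fb-busy}, \eqref{eq:fb-vwork}, and Lemma~\ref{lem:fb-delta}, which you have simply unpacked into the explicit $x + (k-1)x + (k-1)x = (2k-1)x$ bookkeeping. Note that your derivation produces $B_{\bar x}\bigl(\rwork^\fb_{\bar x} + (2k-1)x\bigr)$ with the \emph{single-server} relevant work, which is almost certainly what the theorem statement intends (the $\fbk$ superscript there appears to be a typo, since otherwise Lemma~\ref{lem:fb-delta} would be unused and the constant would be $kx$ rather than $(2k-1)x$, and the downstream use in Theorem~\ref{thm:fb-optimal} requires the single-server quantity to match the standard FB\1/ decomposition).
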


\begin{proof}
  Combining \eqref{eq:fb-busy}, \eqref{eq:fb-vwork}, and Lemma~\ref{lem:fb-delta}
  yields the desired bound.
\end{proof}

Note that the waiting time under FB\1/ is always zero, as a new job immediately receives service, so we do not phrase the bound in terms of waiting time.

With the bound derived in Theorem~\ref{thm:rs}, we can prove that the mean response time of FB\k/ approaches that of FB\1/ in the heavy-traffic limit.
We make use of prior work on
the mean response time of FB in heavy traffic \cite{kamphorst2017heavy}.
Let
\begin{align*}
  W(x) &= \E{B_{\bar x}(\rwork^\fbk_{\bar x})} \\
  R(x) &= \E{B_{\bar x}(x)}.
\end{align*}
$W(x)$ and $R(x)$ are not the mean waiting and residence times
of a job of size~$x$ under FB
because waiting time is always zero,
but they play roughly analogous roles
in the standard analysis of FB \cite[Section~5]{scully2018soap}.

\begin{lemma}
  \label{lem:fb-1-bound}
  In an \mg1/ with any service requirement distribution~$S$ which
  is unbounded with tail function of upper Matuszewska index\footnote{%
      See Section~\ref{sec:matuszewska} or Appendix~\ref{app:matuszewska}.} less than~$-2$,
  \begin{equation*}
    \lim_{\rho \rightarrow 1} \frac{\E{R(S)}}{\E[\big]{T^\fb}} = 0.
  \end{equation*}
\end{lemma}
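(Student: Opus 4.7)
The plan is to mirror the argument used in Lemma~\ref{lem:ratio_for_bound}'s proof of~\eqref{eq:busy_goal}, with $\E[\big]{T^\srpt}$ replaced by $\E[\big]{T^\fb}$ and Lemma~\ref{lem:srpt-1-bound} replaced by Kamphorst and Zwart's heavy-traffic analysis of FB~\cite{kamphorst2017heavy}. First I would apply Definition~\ref{def:busy_period_fb} to rewrite
\[\E{R(S)} = \E*{\frac{S}{1-\rho_{\bar S}}} = \int_0^\infty \frac{x f_S(x)}{1-\rho_{\bar x}}\,dx.\]
Next I would integrate by parts, using $d\rho_{\bar x}/dx = \lambda \bar F_S(x)$ and the fact that $x\bar F_S(x) \to 0$ as $x\to\infty$ (which holds since $\E{S} < \infty$ under our Matuszewska-index hypothesis). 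This should separate $\E{R(S)}$ into an explicit logarithmic term plus a residual of the form
\[\frac{1}{\lambda}\log\frac{1}{1-\rho} + \lambda \int_0^\infty \frac{x \bar F_S(x)^2}{(1-\rho_{\bar x})^2}\,dx.\]

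To bound the residual, I would split its range at a threshold $x^*(\rho)$ where $1-\rho_{\bar x}$ transitions from being dominated by $\rho - \rho_{\bar x}$ (for $x$ below $x^*$) to being dominated by $1-\rho$ (for $x$ above $x^*$). On the upper range, the contribution is controlled by the finite second moment of $S$, which the Matuszewska-index assumption guarantees. On the lower range, a change of variable to $\rho_{\bar x}$ combined with the elementary bound $x\bar F_S(x) \le 2\E{\min(S,x)} = 2\rho_{\bar x}/\lambda$ should produce at most a polylogarithmic function of $1/(1-\rho)$. Overall I expect $\E{R(S)}$ to grow at most polylogarithmically in $1/(1-\rho)$.

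Finally I would invoke the Kamphorst--Zwart heavy-traffic asymptotics for $\E[\big]{T^\fb}$, which under upper Matuszewska index less than $-2$ show that $\E[\big]{T^\fb}$ grows strictly faster than any polylogarithm in $1/(1-\rho)$, so the ratio vanishes. The main obstacle is the middle step: the integration-by-parts residual does not simplify as cleanly as in Lemma~\ref{lem:ratio_for_bound}, because $R(x) = x/(1-\rho_{\bar x})$ carries the variable $x$ in its numerator rather than a fixed constant, and so the direct $d\rho_{\bar x}$-substitution picks up an extra $x$-factor that must be estimated via the Matuszewska tail bound. Getting the estimate tight enough to be dominated by the Kamphorst--Zwart lower bound on $\E[\big]{T^\fb}$ is the crux.
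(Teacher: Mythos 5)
Your approach is genuinely different from the paper's and, as you yourself flag, the crux is under-developed.

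The paper does not attempt any explicit estimate of $\E{R(S)}$ at all. Instead it cites a specific three-way additive decomposition from \citet[Equation~(4.3)]{kamphorst2017heavy}, $\E[\big]{T^\fb} = X(\rho) + Y(\rho) + Z(\rho)$, together with the identifications $\E{W(S)} = Z(\rho) + \tfrac{1}{2}Y(\rho)$ and $\E{R(S)} = X(\rho) + \tfrac{1}{2}Y(\rho)$, and the two limits $X(\rho)/Z(\rho) \to 0$, $Y(\rho)/Z(\rho) \to 0$ that Kamphorst and Zwart prove in their Section~4.1.1. The lemma then falls out in one line. This sidesteps entirely the issue you run into, namely that the numerator of $R(x) = x/(1-\rho_{\bar x})$ carries an $x$-factor that is not absorbed by the change of variable $d\rho_{\bar x} = \lambda\bar F_S(x)\,dx$, so there is no clean analogue of the $\log\frac{1}{1-\rho}$ computation from Lemma~\ref{lem:ratio_for_bound}.

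Your integration by parts does give the right leading term, $\E{R(S)} = \frac{1}{\lambda}\log\frac{1}{1-\rho} + \lambda\int_0^\infty \frac{x\bar F_S(x)^2}{(1-\rho_{\bar x})^2}\,dx$, and your sketch of a split at $x^*(\rho)$ is plausible for regularly varying tails. However, two nontrivial obligations remain. First, you must establish that the residual integral is polylogarithmic under the full Matuszewska-index hypothesis, not just for Pareto; the index condition is an asymptotic bound on tail ratios and permits slowly varying corrections that complicate the transition-point argument. Second, and more importantly, you need a matching \emph{lower} bound: a statement that $\E[\big]{T^\fb}$ grows strictly faster than any polylogarithm in $1/(1-\rho)$ under this same hypothesis. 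You assert Kamphorst--Zwart show this, but it is not the form in which they state their results, and extracting it would itself be a lemma requiring proof. The paper's route avoids both obligations by using the relative decomposition, which makes no claim about the absolute growth rate of either $\E{R(S)}$ or $\E[\big]{T^\fb}$ — only that the former is asymptotically dominated by the latter. What your approach buys, if the gaps were filled, is an explicit growth rate for $\E{R(S)}$; what the paper's approach buys is a much shorter proof with no new estimates to verify.
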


\begin{proof}
  Follows immediately from results of \citet{kamphorst2017heavy}.
  See Appendix~\ref{app:fb-1-bound}.
  \noqed
\end{proof}

\begin{theorem}
  \label{thm:fb-optimal}
  In an \mgk/ with any service requirement distribution~$S$ which
  is unbounded with tail function of upper Matuszewska index less than~$-2$,
  \begin{equation*}
    \lim_{\rho \rightarrow 1} \frac{\E[\big]{T^\fbk}}{\E[\big]{T^\fb}} = 1.
  \end{equation*}
\end{theorem}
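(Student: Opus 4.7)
The plan is to adapt the proof of Theorem \ref{thm:ratio} for SRPT\k/ to the FB setting, replacing Theorem \ref{thm:response_time} with Theorem \ref{thm:fb} and replacing Lemma \ref{lem:srpt-1-bound} with Lemma \ref{lem:fb-1-bound}.

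First I would take Theorem \ref{thm:fb} in expectation over a random size drawn from $S$ and use the linearity of the mean of a relevant busy period in its starting work (which follows from $\E{B_{\bar x}(V)} = \E{V}/(1-\rho_{\bar x})$) to obtain
\[
\E[\big]{T^\fbk} \;\le\; \E{W(S)} + (2k-1)\E{R(S)},
\]
where $W$ and $R$ are as defined just before Lemma \ref{lem:fb-1-bound}.

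Second I would trade the $k$-server relevant work inside $W(x)$ for its FB\1/ counterpart using Lemma \ref{lem:fb-delta}. Writing $W'(x) = \E{B_{\bar x}(\rwork^\fb_{\bar x})}$ for the analogous FB\1/ quantity, the sample-path bound $\rwork^\fbk_{\bar x} \le \rwork^\fb_{\bar x} + (k-1)x$ combined with linearity of expected busy periods yields $\E{W(S)} \le \E{W'(S)} + (k-1)\E{R(S)}$. The classical tagged-job identity for FB\1/ gives $\E[\big]{T^\fb(x)} = W'(x) + R(x)$, hence $\E{W'(S)} = \E[\big]{T^\fb} - \E{R(S)}$, so
\[
\E[\big]{T^\fbk} \;\le\; \E[\big]{T^\fb} + (3k-3)\E{R(S)}.
\]
Dividing by $\E[\big]{T^\fb}$ and applying Lemma \ref{lem:fb-1-bound} yields $\limsup_{\rho\to 1} \E[\big]{T^\fbk}/\E[\big]{T^\fb} \le 1$.

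The main obstacle is the matching lower bound $\liminf \ge 1$. In the analogous proofs for SRPT\k/, PSJF\k/, and RS\k/, this direction was free from SRPT\1/'s universal optimality, but FB\1/ is only known to be optimal among size-oblivious policies under a DHR assumption that Theorem \ref{thm:fb-optimal} does not impose. To close the gap I would look for a direct coupling between FB\1/ and FB\k/ on a common arrival sequence that bounds $\E[\big]{T^\fb} - \E[\big]{T^\fbk}$ by $O(\E{R(S)})$---using that FB\k/'s non-work-conservation only helps while fewer than $k$ relevant jobs are present, and any such advantage is on the order of a single relevant busy period---which together with Lemma \ref{lem:fb-1-bound} would give $\liminf \ge 1$ and close out the theorem.
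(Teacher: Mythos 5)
Your upper-bound argument is essentially the paper's proof: combine Theorem~\ref{thm:fb} with Lemma~\ref{lem:fb-delta}, subtract the standard decomposition $\E[\big]{T^\fb} = \E{W'(S)} + \E{R(S)}$, and kill the $O(\E{R(S)})$ excess with Lemma~\ref{lem:fb-1-bound}. You are, if anything, \emph{more} careful than the paper here: the paper defines $W(x)$ via $\rwork^\fbk_{\bar x}$ yet then asserts $\E[\big]{T^\fb} = \E{W(S)} + \E{R(S)}$, silently identifying the $k$-server and $1$-server steady-state relevant works; your explicit pass through Lemma~\ref{lem:fb-delta}, at the cost of bumping $(2k-2)$ up to $(3k-3)$, is the clean way to make that step rigorous, and the larger constant is immaterial to the limit.

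Your worry about $\liminf \ge 1$ is legitimate, and the paper's two-line proof, read literally, does not address it: unlike Theorems~\ref{thm:ratio}, \ref{thm:psjf-optimal}, and~\ref{thm:rs-optimal}, which invoke SRPT\1/'s universal optimality for the lower bound, Theorem~\ref{thm:fb-optimal} compares against $\E[\big]{T^\fb}$ and imposes no DHR hypothesis, so FB\1/'s optimality among size-oblivious policies is not available. What you would need is the sample-path statement that FB\1/ dominates FB\k/ under coupled arrivals, so that $\E[\big]{T^\fbk} \ge \E[\big]{T^\fb}$ outright. Your sketch of that coupling, however, has the intuition inverted: ``FB\k/'s non-work-conservation only helps'' cannot be right, since idling capacity or capping a job's service rate at $1/k$ can never help FB\k/. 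The content you actually need is that FB\1/ serves the minimum-age jobs at total rate $1$ whenever the system is nonempty, while FB\k/ serves them at total rate $\min(n,k)/k \le 1$, from which one wants to conclude that FB\1/ has stochastically fewer jobs in system at every time. Neither the paper nor your proposal supplies that argument; if the claimed $\liminf$ is to be believed, this is the missing piece.
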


\begin{proof}
  The standard analysis of FB\1/
  \cite{schrage1967queue, harchol2013performance}
  shows
  \begin{equation*}
    \E[\big]{T^\fb} = \E{W(S)} + \E{R(S)},
  \end{equation*}
  whereas Theorem~\ref{thm:fb} implies
  \begin{equation*}
    \E[\big]{T^\fbk} \le \E{W(S)} + (2k - 1)\E{R(S)},
  \end{equation*}
  so the result follows by Lemma~\ref{lem:fb-1-bound}.
\end{proof}

\Citet{righter1989scheduling} show that
when the job size distribution~$S$ has decreasing hazard rate,
FB\1/ is optimal for minimizing response time among all scheduling policies
that do not have access to job sizes.
Theorem~\ref{thm:fb-optimal} implies that in the heavy-traffic limit,
FB\k/ is optimal in the same setting.\footnote{%
  It has been claimed that FB\k/ is optimal for arbitrary arrival sequences
  when the service requirement distribution has decreasing hazard rate
  \cite[Theorem~2.1]{wu2004scheduling}.
  However, the proof has an error.
  See Appendix~\ref{app:interchange}.}

\begin{corollary}
  In an \mgk/ with any service requirement distribution~$S$ which
  \begin{enumerate*}[(a)]
  \item
    is unbounded,
  \item
    has decreasing hazard rate, and
  \item
    has tail function of upper Matuszewska index less than~$-2$,
  \end{enumerate*}
  \begin{equation*}
    \lim_{\rho \rightarrow 1} \frac{\E[\big]{T^\fb}}{\E*{T^P}} \le 1
  \end{equation*}
  for any scheduling policy~$P$ that does not have access to job sizes.
\end{corollary}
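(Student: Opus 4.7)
The plan is to combine Theorem~\ref{thm:fb-optimal} with \citet{righter1989scheduling}'s single-server optimality of FB\1/ via a simulation argument that reduces any $k$-server size-oblivious policy to a single-server one. (I read the numerator $T^\fb$ in the statement as $T^\fbk$; the preceding paragraph and the Matuszewska hypothesis both indicate that FB\k/ is intended.)

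First, for an arbitrary $k$-server size-oblivious policy~$P$, I would construct a single-server size-oblivious policy $\tilde P$, running on a server of speed~$1$, such that on the same arrival sequence every job has the same response time under $\tilde P$ as under~$P$. The construction is a direct replay of~$P$'s decisions: at each instant when $P$ is serving a set~$J$ of at most~$k$ jobs (each at rate $1/k$), let $\tilde P$ processor-share the same set $J$ at rate $1/k$ per job. Since the total service rate is $|J|/k \le 1$, this is a valid schedule on a single server of speed~$1$, and every job accumulates service at exactly the same instantaneous rate under both policies. Departure times coincide, so $\E[\big]{T^{\tilde P}} = \E*{T^P}$, and $\tilde P$ is size-oblivious because it uses no information $P$ does not use.

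Second, since $S$ has decreasing hazard rate, Righter's theorem states that FB\1/ minimizes mean response time among all size-oblivious policies in the \mg1/. Applied to $\tilde P$, this gives
\[\E[\big]{T^\fb} \le \E[\big]{T^{\tilde P}} = \E*{T^P}.\]
Finally, I would invoke Theorem~\ref{thm:fb-optimal}, whose hypotheses on~$S$ match those of the corollary, to get $\E[\big]{T^\fbk}/\E[\big]{T^\fb} \to 1$ as $\rho \to 1$, and chain the two inequalities:
\[\lim_{\rho \to 1} \frac{\E[\big]{T^\fbk}}{\E*{T^P}}
  \le \lim_{\rho \to 1} \frac{\E[\big]{T^\fbk}}{\E[\big]{T^\fb}} = 1.\]

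The main obstacle is the simulation step. The simulator $\tilde P$ is generally history-dependent and, when $|J| < k$, non-work-conserving, so one must verify that Righter's optimality result applies to such policies. Righter's framework covers arbitrary (possibly randomized) size-oblivious policies that condition only on attained-service information and job identities, so this is not a deep issue, but it is the single place in the argument where care is required. Everything else is a straightforward chaining of the simulation bound with Theorem~\ref{thm:fb-optimal}.
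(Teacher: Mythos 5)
Your reading is correct: the numerator in the statement should be $T^\fbk$ (compare Corollary~\ref{cor:optimal}, which has $T^\srptk$ on top), and all three hypotheses are needed precisely because both Theorem~\ref{thm:fb-optimal} and Righter's result get invoked. Your proof is correct and is essentially the argument the paper leaves implicit: chain Theorem~\ref{thm:fb-optimal} with the single-server optimality of FB\1/, justified by the observation (which the paper also tacitly uses to get $\E[\big]{T^\srpt} \le \E[\big]{T^\optk}$) that a single speed-$1$ server can replay any $k$-server, speed-$1/k$ size-oblivious policy's rate allocations, so FB\1/ lower-bounds every size-oblivious $k$-server policy.
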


\subsection{What about First-Come, First-Served?}
\label{sub:fcfs}

Having seen the success of our modified tagged job analysis for a variety of policies,
it is natural to ask: does a similar analysis work for
the multiserver First-Come, First-Served policy (FCFS\k/)?

Unfortunately, our technique does not work for FCFS\k/.
To see why, let us take a look at what our analyses of
SRPT\k/, PSJF\k/, RS\k/, and FB\k/ have in common.
A central component of all four analyses is bounding
the \emph{difference in relevant work}
between two systems experiencing the same arrival sequence,
one using a single-server policy $P$\1/
and another using its $k$-server variant $P$\k/.
These bounds are given in Lemmas~\ref{lem:delta}, \ref{lem:psjf-delta},
\ref{lem:rs-delta}, and~\ref{lem:fb-delta}.
All four lemmas have similar two-step proofs.
\begin{itemize}
\item
  First, they bound the \emph{number of relevant jobs}
  both during few-jobs intervals
  and at the start of many-jobs intervals.
  For all four policies, this bound is at most~$k$.
\item
  Second, they bound the \emph{relevant work contributed by each relevant job}.
  For all four policies, this bound is~$x$.
\end{itemize}
When we try to prove analogous bounds for FCFS\k/,
we can still bound the number of relevant jobs by~$k$,
but \emph{the relevant work contributed by each relevant job is unbounded}.

The definition of relevant jobs
is the crucial difference between FCFS\k/ and the policies we analyze.
Consider the jobs relevant to a tagged job~$j$ of size~$x$.
\begin{itemize}
\item
  Under SRPT\k/, PSJF\k/, and RS\k/,
  only \emph{some} jobs are relevant to~$j$,
  and all such jobs have size at most~$x$.
\item
  Under FB\k/, while all jobs might be relevant to~$j$,
  they are only \emph{temporarily} relevant,
  each contributing at most $x$ relevant work.
\item
  However, under FCFS\k/,
  \emph{all} jobs in the system when $j$ arrives
  are \emph{permanently} relevant to~$j$.
\end{itemize}
This means that if the service requirement distribution~$S$ is unbounded,
our worst-case technique is insufficient for
bounding the difference in relevant work between FCFS\1/ and FCFS\k/.

\section{Conclusion}

We give the first stochastic bound on
the response time of SRPT\k/ (see Section~\ref{sec:analysis}).
Using this bound, we show that
SRPT\k/ has asymptotically optimal mean response time
in the heavy-traffic limit (see Section~\ref{sec:optimal}).
We generalize our analysis
to give the first stochastic bounds on the response times of
the PSJF\k/, RS\k/ and FB\k/ policies,
and we use these bounds to prove asymptotic optimality results
for all three policies (see Section~\ref{sec:other-policies}).

To achieve these results,
we strategically combine stochastic and worst-case techniques.
Specifically, we obtain our bounds using a modified tagged job analysis.
Traditional tagged job analyses for single-server systems
rely on properties that do not hold in multiserver systems,
notably work conservation.
To make tagged job analysis work for multiple servers,
we use two key insights.
\begin{itemize}
\item
  We introduce the concept of \emph{virtual work}
  (see Section~\ref{sec:analysis}),
  which makes the system appear work-conserving
  while the tagged job is in the system.
  We give a worst-case bound for virtual work.
\item
  We compare the multiserver system with a
  \emph{single-server system of the same service capacity}.
  We show that even in the worst case,
  the steady state amount of relevant work under SRPT\k/
  is close to the steady state amount of relevant work under SRPT\1/.
\end{itemize}
Applying these two insights to the tagged job analysis
gives a \emph{stochastic} expression bounding response time.

One direction for future work is to apply our technique
to a broader range of scheduling policies.
In particular, we conjecture that out results generalize
to the SMART class of policies \cite{wierman2005nearly},
which includes SRPT, PSJF, and RS.
Another direction is to improve our response time bounds under low system load.
While our bounds are valid for all loads,
they are only tight for load near capacity.

\bibliographystyle{ACM-Reference-Format}
\bibliography{srpt}

\appendix

\section{Improved SRPT\k/ Bound}
\label{app:improved-srpt}

\begin{reptheorem}{thm:improved-response}
  In an \mgk/,
  the mean response time of a job of size~$x$ under SRPT\k/ is bounded by
  \[\E[\big]{T^\srptk(x)} \le \frac{\int_0^x \lambda t^2f_S(t) \,dt}{2(1-\rho_{\le x})^2}
  + \frac{k\rho_{\le x}x}{1-\rho_{\le x}}
  + \int_0^x \frac{k}{1-\rho_{\le t}} \,dt,\]
  where $f_S(\cdot)$ is the probability density function of
  the service requirement distribution~$S$.
\end{reptheorem}

\begin{proof}
  We will prove Theorem~\ref{thm:improved-response} by proving improved versions
  of \eqref{eq:busy_bound} and Lemma~\ref{lem:delta}.

  A key element of our analysis is bounding the amount of new work done
  while the tagged job $j$ of size $x$ is in the system.
  In \eqref{eq:busy_bound},
  we bound this quantity by a relevant busy period with size cutoff~$x$.
  However, in reality, the size cutoff decreases as $j$ receives service.
  We can use this to give a tighter bound on the amount of new work performed.

  Let $r_j$ be the amount of relevant work seen by $j$ on arrival.
  Note that $r_j$ is also the amount of old work that will be done while $j$
  is in the system.

  Starting from the time of $j$'s arrival, after at most $B_{\le x}(r_j)$ time,
  $j$ must enter service. During this busy period, an amount of work is performed
  equal to $r_j$ plus all arrivals during this busy period.

  More generally, for any amount of time $s \le x$,
  after at most a relevant busy period started by $r_j + ks$ work,
  $j$ must have received $s$ service.
  This holds because even if the servers finish
  all the old work and all the new work that has arrived so far,
  the servers must still complete $ks$ combined tagged and virtual work.
  Of this tagged and virtual work,
  at least $s$ must be tagged work, namely serving~$j$.
  This means that the first $dt$ service of $j$ must be completed by time
  \begin{equation*}
    B_{\le x}(r_j) + B_{\le x}(k \cdot dt).
  \end{equation*}
  The next $dt$ service of $j$ must be completed by time
  \begin{equation*}
    B_{\le x}(r_j) + B_{\le x}(k \cdot dt) + B_{\le x-dt}(k \cdot dt),
  \end{equation*}
  because the cutoff for entering the relevant busy period decreases as $j$ receives service.
  Similarly, the following $dt$ service of $j$ must be completed by time
  \begin{equation*}
    B_{\le x}(r_j) + B_{\le x}(k \cdot dt) + B_{\le x-dt}(k \cdot dt) + B_{\le x-2\,dt}(k \cdot dt).
  \end{equation*}
  This pattern continues as $j$ receives service.
  The descending size cutoff yields
  the same sort of relevant busy period as in the traditional tagged job analysis
  of SRPT\1/ \cite{schrage1966queue}.
  Recalling that $r_j$ is drawn from the distribution $\rwork_{\le x}^\srptk$
  yields the following bound on the mean response time of $j$:
  \begin{equation}
    \label{eq:improved-busy-bound}
    T^\srptk(x) \le B_{\le x}\bigl(\rwork_{\le x}^\srptk\bigr) + \int_0^x B_{\le t}(k\cdot dt).
  \end{equation}
  With \eqref{eq:improved-busy-bound}, we have improved upon~\eqref{eq:busy_bound}.

  Next, we will improve upon Lemma~\ref{lem:delta}. We consider a pair of systems
  experiencing the same arrival sequence: System~1, which uses PSJF\1/, and System~$k$,
  which uses SRPT\k/.

  Recall from Section~\ref{sec:psjf} that under PSJF\1/, a job~$\ell$ is relevant
  to~$j$ if~$\ell$
  has \emph{original} size at most $x$.
  In contrast, under SRPT\k/, a job~$\ell$ is
  relevant to~$j$ if~$\ell$ has \emph{remaining} size at most $x$.

  We define $\Delta'_{\le x}(t)$ to be the difference between the amounts of relevant work
  in the two systems at time~$t$.
  Using Lemma~\ref{lem:improved-delta} (proof deferred),
  we obtain a bound on $\Delta'_{\le x}(t)$
  tighter than the analogous bound in Lemma~\ref{lem:delta}.

\begin{lemma}
  \label{lem:improved-delta}
  The difference in relevant work between Systems 1 and $k$ is bounded by
  \begin{equation*}
    \Delta'_{\le x}(t) \le x\cdot\busy^\B_{\le x}(t)
  \end{equation*}
  where $\busy_{\le x}^\B(t)$ is the number of servers in System~$k$ which are busy with relevant work at time $t$.
\end{lemma}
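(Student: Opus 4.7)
The plan is to refine the two-case structure of Lemma~\ref{lem:delta}, splitting time into few-jobs intervals (when System~$k$ has fewer than~$k$ relevant jobs) and many-jobs intervals, but replacing the blunt bound of~$k$ on the number of busy servers with the exact quantity $\busy^\B_{\le x}(t)$.

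For the few-jobs case, the key observation is that when System~$k$ has fewer than $k$ relevant jobs, all of them must be in service under SRPT\k/ (since relevant jobs have priority). Hence $\busy^\B_{\le x}(t)$ equals the number of relevant jobs in System~$k$, and since each such job has remaining size at most~$x$, the chain
\begin{equation*}
  \Delta'_{\le x}(t) \le \rwork^\B_{\le x}(t) \le x \cdot \busy^\B_{\le x}(t)
\end{equation*}
follows immediately.

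For the many-jobs case, $\busy^\B_{\le x}(t) = k$, so it suffices to show $\Delta'_{\le x}(t) \le kx$. Let $s$ be the start of the many-jobs interval containing~$t$. The plan is to first show $\Delta'_{\le x}$ is non-increasing on $[s, t]$ and then that $\Delta'_{\le x}(s) \le kx$, reusing the decomposition of each side's change in relevant work into arrivals, newly-relevant, and served components as in the proof of Lemma~\ref{lem:delta}. Even though System~$1$ now uses PSJF\1/ rather than SRPT\1/, the arrival terms match: a new arrival is PSJF-relevant iff its original size is at most~$x$, which coincides with being SRPT-relevant upon arrival, so $\mathtt{Arrivals}^\A = \mathtt{Arrivals}^\B$. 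Under PSJF, relevance is determined by original size and never changes, so $\mathtt{NewlyRelevant}^\A = 0$; during a many-jobs interval every server in System~$k$ works on a relevant job, so $\mathtt{NewlyRelevant}^\B = 0$; and since $\mathtt{Served}^\B = t - s \ge \mathtt{Served}^\A$, we obtain $D^\B \le D^\A$ and hence monotonicity. For the initial value $\Delta'_{\le x}(s)$, either the many-jobs interval begins with a relevant arrival joining $k - 1$ existing relevant jobs in System~$k$ (giving $\Delta'_{\le x}(s) = \Delta'_{\le x}(s^-) \le (k - 1)x$ by the few-jobs bound, since the same arrival occurs in System~$1$), or with one or more irrelevant System~$k$ jobs crossing the size-$x$ threshold during service; a counting argument identical to that in Lemma~\ref{lem:delta} bounds the resulting number of relevant jobs in System~$k$ by~$k$, so $\rwork^\B_{\le x}(s) \le kx$.

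The main obstacle I expect is carefully reconciling the two different notions of relevance---PSJF-relevant (original size $\le x$) in System~$1$ versus SRPT-relevant (remaining size $\le x$) in System~$k$. A job with original size exceeding~$x$ can contribute to $\rwork^\B_{\le x}$ once its remaining size drops below~$x$ but never contributes to $\rwork^\A_{\le x}$. The many-jobs condition is precisely what tames this asymmetry, because it forces $\mathtt{NewlyRelevant}^\B = 0$ and keeps the comparison between $D^\A$ and $D^\B$ clean; outside of many-jobs intervals we do not need the comparison at all, since the few-jobs bound is purely a one-sided estimate on $\rwork^\B_{\le x}(t)$.
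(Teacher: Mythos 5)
Your proof is correct and follows essentially the same route as the paper's: both split time into few-jobs and many-jobs intervals, observe that during a few-jobs interval $\busy^\B_{\le x}(t)$ equals the number of relevant jobs in System~$k$ (since relevant jobs have priority and all fit on servers), reduce the many-jobs case to $\Delta'_{\le x}(t) \le kx$ via monotonicity on $[s,t]$ using the $\mathtt{Arrivals}$/$\mathtt{NewlyRelevant}$/$\mathtt{Served}$ decomposition, and handle the two ways a many-jobs interval can start exactly as in Lemma~\ref{lem:delta}. Your explicit remark that $\mathtt{Arrivals}^\A = \mathtt{Arrivals}^\B$ because PSJF-relevance at arrival time coincides with SRPT-relevance at arrival time is precisely the observation the paper highlights (``Upon arrival a job's original and remaining sizes are equal''), and your note that $\mathtt{NewlyRelevant}^\A = 0$ under PSJF is a harmless strengthening of the paper's $\mathtt{NewlyRelevant}^\A \ge 0$.
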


\begin{proof}
  We define few-jobs intervals and many-jobs intervals as in Section~\ref{sec:srpt-rwork}.
  Note that $\busy_{\le x}^\B(t) = k$ during a many-jobs interval, and that
  $\busy_{\le x}^\B(t)$ is the number of jobs in the system during a few-jobs interval.

  The case where $t$ is in a few-jobs interval is simple: there are exactly $\busy_{\le x}^\B(t)$
  jobs in System $k$ at time $t$, each of remaining size at most $x$, so
  \begin{equation*}
    \Delta'_{\le x}(t) \le x\cdot\busy_{\le x}^\B(t).
  \end{equation*}

  Suppose instead that $t$ is in a many-jobs interval,
  in which case $\busy_{\le x}^\B(t) = k$. Let time $s$
  be the start of the many-jobs interval containing $t$.
  Over the interval $[s, t]$, the
  same amount of relevant work arrives in both systems,
  because relevant arrivals are the same under SRPT and PSJF.
  Upon arrival a job's original and remaining sizes are equal.
  The other two categories of relevant work over the interval follow
  the same arguments as in the proof of Lemma~\ref{lem:delta}. Thus,
  \begin{equation*}
    \Delta'_{\le x}(t) \le \Delta'_{\le x}(s).
  \end{equation*}

  It therefore suffices to show $\Delta'_{\le x}(s) \le kx$.
  As in Lemma~\ref{lem:delta}, a many-jobs interval can begin due to
  the arrival of a relevant job, or due an irrelevant job in System~$k$ becoming relevant.
  In the case of an arrival, the same arrival occurs in System~1,
  and must be relevant in System~1, so
  \begin{equation*}
    \Delta'_{\le x}(s) = \Delta_{\le x}(s^-) \le (k-1)x,
  \end{equation*}
  because $s^-$, the instant before~$s$, is in a few-jobs interval.
  In the case of an irrelevant job in System~$k$ becoming relevant, by the same argument
  as in the proof of Lemma~\ref{lem:delta},
  \begin{equation*}
    \Delta'_{\le x}(s) \le \rwork^\B_{\le x}(s) \le kx.
    \qedhere
  \end{equation*}
\end{proof}

  Continuing the proof of Theorem~\ref{thm:improved-response},
  we are now ready to prove the stronger bound.
  From \eqref{eq:improved-busy-bound}, we know
  \begin{equation*}
    T^\srptk(x) \le B_{\le x}\bigl(\rwork_{\le x}^\srptk\bigr) + \int_0^x B_{\le t}(k\cdot dt).
  \end{equation*}
  By plugging in Lemma~\ref{lem:vwork} and Lemma~\ref{lem:improved-delta},
  we find that
  \begin{align*}
    &T^\srptk(x) \\
    &\le B_{\le x}\bigl(\rwork_{\le x}^\psjf + x\cdot\busy^\srptk_{\le x}\bigr) + \int_0^x B_{\le t}(k\cdot dt)\\
    &= B_{\le x}\bigl(\rwork_{\le x}^\psjf\bigr) + B_{\le x}\bigl(x\cdot\busy^\srptk_{\le x}\bigr) + \int_0^x B_{\le t}(k\cdot dt)\\
    &= W^\psjf(x) + B_{\le x}\bigl(x\cdot\busy^\srptk_{\le x}\bigr) + \int_0^x B_{\le t}(k\cdot dt),
  \end{align*}
  where $\busy^\srptk_{\le x}$ is the steady state number of servers which are busy with relevant jobs under SRPT\k/.
  Taking expectations yields
  \begin{multline*}
    \E[\big]{T^\srptk(x)} \le \E[\big]{W^\psjf(x)}
    + \E[\big]{B_{\le x}\bigl(x\cdot\busy^\srptk_{\le x}\bigr)} \\
    + \int_0^x \E{B_{\le t}(k\cdot dt)}.
  \end{multline*}
  From the literature \cite{wierman2005nearly}, we know that
  \begin{equation*}
    \E[\big]{W^\psjf(x)} =
    \frac{\int_0^x \lambda t^2f_S(t) \,dt}{2(1-\rho_{\le x})^2}.
  \end{equation*}
  By the expectation of a relevant busy period, from Definition~\ref{def:busy_period},
  \begin{equation*}
    \int_0^x \E{B_{\le t}(k\cdot dt)} =  \int_0^x \frac{k}{1-\rho_{\le t}} \,dt.
  \end{equation*}
  Similarly,
  \begin{equation*}
    \E[\big]{B_{\le x}\bigl(x\cdot\busy^\srptk_{\le x}\bigr)} = \dfrac{\E[\big]{x\cdot\busy^\srptk_{\le x}}}{1-\rho_{\le x}}.
  \end{equation*}

  The average rate at which the SRPT\k/ system performs relevant work is
  $\E[\big]{\busy^\srptk_{\le x}}/k$, since each busy server does work at rate $1/k$.
  Because the system is stable, the rate at which
  relevant work is done must equal the rate at which relevant work enters the system,
  namely $\rho_{\le x}$.
  Thus, $\E[\big]{\busy^\srptk_{\le x}} = k \rho_{\le x}$, so
  \begin{equation*}
    \E[\big]{B_{\le x}\bigl(x\cdot\busy^\srptk_{\le x}\bigr)} = \dfrac{k \rho_{\le x}x}{1-\rho_{\le x}},
  \end{equation*}
  yielding the desired bound.
\end{proof}

\section{Matuszewska Index}
\label{app:matuszewska}

The heavy-traffic results in this paper, such as Theorem~\ref{thm:ratio},
assume that
the service requirement distribution~$S$ is not too heavy-tailed.
Specifically, we require that either $S$ is bounded or that
the \emph{upper Matuszewska index} of the tail of $S$ is less than~$-2$.
This is slightly stronger than assuming that $S$ has finite variance.
The formal definition of the upper Matuszewska index is the following.

\begin{definition}
  Let $f$ be a positive real function.
  The \emph{upper Matuszewska index} of~$f$, written $M(f)$,
  is the infimum over $\alpha$ such that
  there exists a constant $C$ such that for all $\gamma > 1$,
  \begin{equation*}
    \lim_{x \to \infty}\frac{f(\gamma x)}{f(x)} \le C\gamma^\alpha.
  \end{equation*}
  Moreover, for all $\Gamma > 1$,
  the convergence as $x \to \infty$ above must be uniform in $\gamma \in [1, \Gamma]$.
\end{definition}

The condition $M(\bar F_S) < -2$, where $\bar F_S$ is the tail of~$S$,
is intuitively close to saying that $F_S(x) \le Cx^{-2 - \epsilon}$
for some constant $C$ and some $\epsilon > 0$.
Roughly speaking, this means that
$S$ has a lighter tail than a Pareto distribution with $\alpha = 2$.

\section{SRPT\1/ in Heavy Traffic}
\label{app:srpt-1-bound}

\begin{replemma}{lem:srpt-1-bound}
  In an \mg1/ with any service requirement distribution~$S$ which is either
  \begin{enumerate*}[(i)]
  \item
    \label{item:bounded}
    bounded or
  \item
    \label{item:unbounded}
    unbounded with tail function of upper Matuszewska index\footnote{%
      See Section~\ref{sec:matuszewska} or Appendix~\ref{app:matuszewska}.} less than~$-2$,
  \end{enumerate*}
  \[\lim_{\rho \rightarrow 1} \frac{\log\Bigl(\frac{1}{1-\rho}\Bigr)}{\E*{T^\srpt}} = 0.\]
\end{replemma}

\begin{proof}
  \Citet{Lin} show in their Theorem~1 that if $S$ is bounded, then
  \begin{equation*}
    \E[\big]{T^\srpt} = \Theta\biggl(\frac{1}{1-\rho}\biggr),
  \end{equation*}
  proving case~\ref{item:bounded}.
  They also show in their Theorem~2 that if the upper Matuszewska index of the tail of $S$ is less than~$-2$, then
  \begin{equation*}
    \E[\big]{T^\srpt} = \Theta\biggl(\frac{1}{(1-\rho)G^{-1}(\rho)}\biggr),
  \end{equation*}
  where $G^{-1}(\cdot)$ is the inverse of $G(x) = \rho_{\le x}/\rho$. In their proof of Theorem~2, they also show that
  \begin{equation*}
    \lim_{\rho \rightarrow 1} \log \left(\frac{1}{1-\rho} \right) \cdot (1-\rho)G^{-1}(\rho) = 0,
  \end{equation*}
  proving case~\ref{item:unbounded}.
\end{proof}

\section{FB\1/ in Heavy Traffic}
\label{app:fb-1-bound}

\begin{replemma}{lem:fb-1-bound}
  In an \mg1/ with any service requirement distribution~$S$ which
  is unbounded with tail function of upper Matuszewska index less than~$-2$,
  \begin{equation*}
    \lim_{\rho \rightarrow 1} \frac{\E{R(S)}}{\E[\big]{T^\fb}} = 0.
  \end{equation*}
\end{replemma}

\begin{proof}
  Recall that
  \begin{align*}
    W(x) &= \E{B_{\bar x}(\rwork^\fbk_{\bar x})} \\
    R(x) &= \E{B_{\bar x}(x)}.
  \end{align*}
  The standard analysis of FB\1/
  \cite{schrage1967queue, harchol2013performance}
  shows
  \begin{equation*}
    \E[\big]{T^\fb} = \E{W(S)} + \E{R(S)}.
  \end{equation*}
  \Citet[Equation~(4.3)]{kamphorst2017heavy}
  decompose $\E[\big]{T^\fb}$ into a sum of three functions of the load~$\rho$,
  \begin{equation*}
    \E[\big]{T^\fb} = X(\rho) + Y(\rho) + Z(\rho),
  \end{equation*}
  such that
  \begin{align*}
    \E{W(S)} &= Z(\rho) + \frac{1}{2}Y(\rho) \\
    \E{R(S)} &= X(\rho) + \frac{1}{2}Y(\rho).
  \end{align*}
  \Citet[Section~4.1.1]{kamphorst2017heavy} then show that
  \begin{equation*}
    \lim_{\rho \to 1} \frac{X(\rho)}{Z(\rho)}
    = \lim_{\rho \to 1} \frac{Y(\rho)}{Z(\rho)} = 0,
  \end{equation*}
  which implies the desired limit
  \begin{equation*}
    \lim_{\rho \rightarrow 1} \frac{\E{R(S)}}{\E[\big]{T^\fb}}
    = \lim_{\rho \rightarrow 1}
      \frac{X(\rho) + \frac{1}{2}Y(\rho)}{X(\rho) + Y(\rho) + Z(\rho)}
    = 0.
    \qedhere
  \end{equation*}
\end{proof}

\section{Flawed Interchange Arguments}
\label{app:interchange}

\Citet[Theorem~2.1]{down2006multi} claim that SRPT\k/ is optimal in the sense of minimizing the completion time of the $n$th job for all $n$, under the additional assumption that all servers are busy at all times.
Unfortunately, this claim is false.
The proof attempts to use an interchange argument,
mimicking the classic proof of the optimality of SRPT\1/ \cite{schrage1968letter}.
However, the specified interchange can result in
the same job running on two servers simultaneously,
which is of course not possible.

A concrete counterexample is the following: let $k=2$, and let jobs of size $1, 1, 2$ and $2$ arrive at time $0$. Recall that a job of size $x$ must be in service for $kx$ time to complete. SRPT\k/ completes its third job at time $6$, while a policy which serves a job of size $2$ over the interval $[0, 4)$ and jobs of size $1$ over the intervals $[0, 2)$ and $[2, 4)$ would finish its third job at time~$4$.
Moreover, more complicated counterexamples exist which show that multiserver SRPT does not minimize mean response time even if all servers are busy at all times.

A similar error occurs in a claim by \citet[Theorem~2.1]{wu2004scheduling}
that FB\k/ is optimal among policies that do not have access to job size information
when the service requirement distribution has decreasing hazard rate.
Again the proof given is an interchange argument,
and again the specified interchange can result in
the same job running on two servers simultaneously.

\end{document}